\definecolor{ForestGreen}{rgb}{0.0333,0.4451,0.0333}
\definecolor{DarkRed}{rgb}{0.65,0,0}
\definecolor{Red}{rgb}{1,0,0}
\definecolor{forestgreen}{rgb}{0.13, 0.55, 0.13}
\newtheorem{theorem}{Theorem}[section]
\newtheorem{lemma}[theorem]{Lemma}
\newtheorem{claim}[theorem]{Claim}
\newtheorem{remark}[theorem]{Remark}
\newtheorem{corollary}[theorem]{Corollary}
\newtheorem{definition}[theorem]{Definition}
\newcommand*\mcb[1]{\boldsymbol{\mathcal{#1}}}
\newcommand*\bb[1]{\boldsymbol{#1}}
\DeclarePairedDelimiter\abs{\lvert}{\rvert}
\DeclarePairedDelimiter\norm{\lVert}{\rVert}%
\let\oldabs\abs
\def\abs{\@ifstar{\oldabs}{\oldabs*}}
\let\oldnorm\norm
\def\norm{\@ifstar{\oldnorm}{\oldnorm*}}
\DeclareMathOperator{\poly}{poly}
\newcommand{\E}{\mathbb{E}}
\newcommand{\OPT}{\text{OPT}}
\newtheorem{assumption}[theorem]{Assumption}
\DeclareMathOperator{\cost}{cost}
\DeclareMathOperator{\dist}{dist}
\DeclareMathOperator{\polylog}{polylog}
\newcommand{\tw}{\mathrm{tw}}
\DeclareMathOperator{\cc}{reach}
\newif\ifcomments
\definecolor{dnotecol}{rgb}{0.20, 0.50, 0.80}
\date{}
\title{The Telephone $k$-Multicast Problem}
\author{Daniel Hathcock
\small Carnegie Mellon University, 
\small USA
\small {\tt dhathcoc@alumni.cmu.edu}
\and 
Guy Kortsarz
\small Rutgers University, Camden, USA 
\small {\tt guyk@camden.rutgers.edu}
\and 
R. Ravi
\small Carnegie Mellon University,
\small USA
\small {\tt ravi@andrew.cmu.edu}
}
\begin{document}

\maketitle

\begin{abstract}
We consider minimum time multicasting problems in directed and undirected graphs: given a root node and a subset of $t$ terminal nodes, multicasting seeks to find the minimum number of rounds within which all terminals can be informed with a message originating at the root. In each round, the telephone model we study allows the information to move via a matching from the informed nodes to the uninformed nodes. 

Since minimum time multicasting in digraphs is poorly understood compared to the undirected variant, we study an intermediate problem in undirected graphs that specifies a target $k < t$, and requires that only $k$ of the terminals be informed in the minimum number of rounds. For this problem, we improve the implications of the previous results and obtain a multiplicative approximation factor of $\tilde{O}(t^{1/3})$. For the directed version, we obtain an {\em additive} $\tilde{O}(k^{1/2})$ approximation algorithm (with a polylogarithmic multiplicative factor). Our algorithms are based on reductions to the related problems of finding $k$-trees of minimum poise (sum of maximum degree and diameter) and applying a combination of greedy network decomposition techniques and set covering under partition matroid constraints. 

We also study the problem of bounded degree Directed Steiner Tree, for which we obtain improved polylogarithmic approximations for the special case of bounded treewidth graphs. This extends prior work on the Group Steiner Tree problem. 
\end{abstract}

\section{Introduction}

We study an information spreading problem that captures applications in distributed computing~\cite{g45} and keeping distributed copies of databases synchronized~\cite{demers1987epidemic}. 
A given graph models a synchronous network of processors that exchange information in rounds. 
There are several models that describe how information may be exchanged between processors in the graph. In this work, we focus on the classic \emph{Telephone Model}~\cite{Hedetniemi1988GossipSurvey}: during a round, each vertex that knows the message can send the message to at most one of its neighbors. 

In the {\bf Minimum Time Telephone Multicast} (MTM) problem, we are given a network, modeled by a directed or undirected graph $G(V,E)$, a root vertex $r$ that knows a message, and a set $S$ of terminals. The message must be transmitted from $r$ to $S$ under the telephone model. In every round, there is a set of vertices $K \subseteq V$ that know the message (initially $K = \{r\}$), and communication in a given round is described by a matching $\{(k_1, v_1), \dots, (k_\ell, v_\ell)\}$ between some pairs of vertices $k_i \in K$ and $v_i \not \in K$ for which $k_iv_i \in E$. In the directed setting, the edge $k_iv_i$ must be directed from $k_i$ to $v_i$. Following this round, all matched vertices $\{v_i\}$ are added to $K$. When $S=V$ this problem is called 
{\bf The Minimum Time Broadcast} (MTB) problem.

The best known approximation ratio for the MTM problem on an undirected graph is due to Elkin and Kortsarz~\cite{g21}. They achieve $O(\log t/\log\log t)$, where $t=|S|$. In \cite{EK0}, it is shown that unless $P=NP$, the MTB problem does not admit an approximation $3-\epsilon$ for any constant $\epsilon$. For directed graphs, the Minimum Time Broadcast problem admits an $O(\log n)$ approximation \cite{EK0} in an $n$-node graph. The same paper shows that unless $P=Quasi(P)$ the problem admits no better than $\Omega(\sqrt{\log n})$ approximation.
 
However, for the directed case the multicast problem {\em seems} harder to approximate. The best known approximation ratio for this problem is an additive guarantee of $O(\sqrt{t})$ (with a polylogarithmic multiplicative factor) \cite{g24}. This leaves a wide gap between the current best approximation algorithms for undirected versus directed multicast problems. In this work, we make progress toward closing that gap by studying an intermediate problem, the \textbf{Minimum Time Telephone $k$-Multicast} problem ($k$-MTM), defined below.
\begin{center} \fbox{\begin{minipage}{0.965\textwidth} 
{\bf Input:} A directed or undirected graph $G(V,E)$ with root $r$, a collection
of terminals $S\subseteq V$  and a number $k\leq|S|$.  

{\bf Required:} 
Send the message originating at $r$ to {\em any} $k$ terminals of $S$ in the telephone model
in a minimum number of rounds.
\end{minipage}}\end{center}
In terms of approximability, the undirected $k$-MTM problem lies between the undirected and directed MTM problems. Specifically, in \cite{g18} it is shown\footnote{\cite{g18} deals with the degree-bounded versions of these problems, but their proof works as well for poise (maximum degree + diameter) problems. See below for the connection between poise and $k$-MTM.} that a $\rho$-approximation for directed MTM implies an $O(\polylog k) \cdot \rho$-approximation for undirected $k$-MTM, while it is immediate that any approximation for undirected $k$-MTM gives the same factor approximation for undirected MTM.

On the other hand, the directed version of the $k$-MTM problem generalizes all the aforementioned problems.

\paragraph{Applications} 
Broadcast and multicast problems find numerous applications in distributed settings. For example, in the Network Aggregation problem, each user sends its data to a chosen central vertex $r$. This is equivalent to broadcasting in the local model for distributed computation (see \cite{g54}).
Broadcasting is also crucial in sensor networks \cite{g23}.
Another application is to ensure that the maximum information delay in vector
clock problems is minimized \cite{g75,g20}.

%\dnote{Talk about distributed database placement for $k$-MTM}
An application of multicasting is to keep the information between copies of replicated databases consistent, by broadcasting the changed copy to the others \cite{g71,nr14,g72}.
If we are given a large set of $t$ terminals of which we only want to keep replicated copies in some $k$ of them, finding the best $k$ to minimize the maximum synchronization time between these terminals corresponds to the $k$-MTM problem.

\paragraph{Minimum Poise Trees} Any telephone multicast schedule defines a tree rooted at $r$, 
spanning all terminals. The parent of a vertex $u\neq r$ is defined to be the 
unique vertex that sends the message to $u$.
Let $T^*$ be the tree defined by the optimal schedule. 
The height of $T^*$ (the largest distance in $T^*$ from the root) 
is denoted by $D^*$. The largest out-degree\footnote{For simplicity, we say degree instead of out-degree for the rest of the chapter when discussing directed graphs.} in $T^*$ is denoted by $B^*$. 
The {\em poise}\footnote{Poise is often defined as the sum of maximum degree and diameter, rather than height, in settings where there is no prescribed root. Since our setting has a root, we choose to use height. This cannot change any result by more than a factor 2.} of $T^*$ is defined as $p^*=B^*+D^*$
\cite{g56}. Denote by $\OPT$ the number of rounds used by the optimal schedule.
Since at every round, each informed vertex can send the message 
to at most one neighbor, $\OPT \geq B^*$ and $\OPT \geq D^*$. Hence, in general we have
$\OPT \geq \nicefrac{p^*}{2}$. A partial converse is shown in \cite{g56}. A     
$\rho$ approximation for the \textbf{Minimum Poise Steiner Tree} implies an $O(\log t)\cdot \rho/\log\log t$ approximation for the MTM problem.

An identical approach to \cite{g56} yields the equivalence (up to logarithmic factors in $k$) of approximating the $k$-MTM problem and approximating the following \textbf{Minimum Poise Steiner $k$-Tree} problem:

\begin{center} 
\fbox{\begin{minipage}{0.965\textwidth} 
{\bf Input:} A directed or undirected graph $G(V,E)$ with root $r$, a collection
$S\subseteq V$ of terminals, and a number $k$.

{\bf Required:} 
A $k$-tree rooted at $r$, namely a tree $T'(V,E)$ containing 
paths from $r$ to $k$ of the terminals, with minimum poise.
\end{minipage}}\end{center}

\begin{definition}
A $\tilde O(f(k))$-additive approximation for a minimization problem with optimal cost $\OPT$ is an algorithm that returns a solution whose cost is $\tilde O(\OPT) + \tilde O(f(k))$.
\end{definition}

We include the proof of the following lemma in the preliminaries for completeness. 
\begin{lemma}[\cite{g56}]\label{lem:poise-to-mtm}
    A $\tilde O(f(k))$-additive approximation for the Minimum Poise Steiner $k$-Tree problem implies an $\tilde O(f(k))$-additive approximation for the $k$-MTM problem.   
\end{lemma}

In \cite{g18}, they show that the approximability of minimum degree Steiner $k$-Tree reduces to minimum degree Group Steiner Tree (which is a special case of minimum degree Directed Steiner Tree). Their reduction immediately extends to the minimum poise versions of these problems. Hence, the approximability of the undirected Minimum Poise Steiner $k$-Tree problem lies between the undirected Minimum Poise Steiner Tree problem and the directed Minimum Poise Steiner Tree problem (up to $\log k$ factors). This implies the aforementioned analogous statement about the relationship between undirected $k$-MTM and the undirected/directed MTM problems. 

We focus on approximating these poise problems.

\paragraph{Bounded Treewidth Graphs} We further consider the setting in which it is assumed that the input graph has bounded treewidth. Trees and series-parallel graphs fall into this family, and the setting has been studied as an important special case for numerous graph algorithm problems (e.g.~\cite{BHM11,GTW13,CHLN05,CDLV17,DKL24}, to name a few).

We consider the \textbf{Bounded Degree Directed Steiner Tree} problem with edge weights in the bounded treewidth setting:

\begin{center} 
\fbox{\begin{minipage}{0.965\textwidth} 
{\bf Input:} A directed graph $G(V, E)$ with root $r$ and edge weights $\{w_e\}_{e \in E}$, a collection of terminals $S \subseteq V$, and an out-degree bound $d_v$ for every vertex $v \in V$.

{\bf Required:} 
A minimum-weight directed Steiner tree rooted at $r$, that is, a tree $T'(V,E)$ containing a path from $r$ to every terminal that satisfies the out-degree bound for every vertex.
\end{minipage}}\end{center}

The algorithms and analysis for the poise problems considered in this chapter separately bound the degree and height of the returned tree. The degree bound makes up the bulk of the analysis, while the height bound is a simple observation. Interestingly, for the above Bounded Degree Directed Steiner Tree problem on bounded treewidth graphs, our analysis addresses the degree bounds, but it appears difficult to also bound the height of the tree in order to give a statement about poise in this setting as well. 

\subsection{Our results}

%\dnote{should the below just be moved to "Our results"?}
We give an $\tilde O(\sqrt{k})$-additive approximation for the directed versions of the $k$-MTM (and Minimum Poise Steiner $k$-Tree) problems. 
%Minimum Poise Steiner $k$-tree problem. 
\begin{theorem}
\label{t1}
Minimum Poise Steiner $k$-tree 
problem on directed graphs
admits a polynomial time  
$\tilde O(k^{1/2})$-additive 
approximation. 
This implies the same approximation for the Minimum Time Telephone $k$-multicast problem.
\end{theorem}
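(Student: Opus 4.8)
The plan is to guess the two optimal parameters, reduce to a degree- and height-bounded directed Steiner $k$-tree problem, and solve that by a recursion on the number of terminals whose branching factor is balanced around $\sqrt{k}$, using set cover under a partition matroid constraint to keep degrees small and a greedy ``network decomposition''/densest-bunch subroutine to build the low-poise subtrees that the directed setting requires. Concretely, I would first enumerate the $|V|^2$ candidate pairs $(B^*,D^*)$ and treat them as known; it then suffices to output a tree rooted at $r$ spanning $k$ terminals in which every out-degree is $\tilde O(B^*)+O(\sqrt k)$ and every $r$--terminal path has length $O(D^*)$. Two elementary facts frame the argument: (i) $T^*$ has at most $B^{*i}$ vertices at depth $i$; and (ii) $B^*+D^*=\Omega(\log k/\log\log k)$, which shows that as soon as $B^*$ or $D^*$ is $\Omega(\sqrt k)$ the additive term is absorbed and a purely $\tilde O(1)$-multiplicative degree- and height-bounded approximation suffices, so the substantive regime is $B^*,D^*=O(\sqrt k)$.

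The structural heart of the argument is that a low-poise solution always contains a set of $O(\sqrt k)$ ``portal'' vertices reachable from $r$ within height $O(D^*)$ along a skeleton of out-degree $\tilde O(B^*)$, from which the $k$ terminals can be reached by subtrees that individually serve only $O(\sqrt k)$ terminals each. Hence $r$ --- and, inductively, each skeleton vertex --- needs out-degree only $\tilde O(B^*)+O(\sqrt k)$, while every portal heads a subproblem of exactly the same shape but with only $\sqrt k$ terminals. Recursing there costs additive error $\sqrt{\sqrt k}=k^{1/4}$, then $k^{1/8}$, and so on --- a geometric series that never exceeds $O(\sqrt k)$ --- and bottoms out after $O(\log\log k)$ levels. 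The one point that needs care is the height: one cannot recurse on $D^*$, so I would partition the height budget in advance, allotting each of the $O(\log\log k)$ recursion levels at most a $1/\mathrm{polyloglog}(k)$ fraction of it (treating the $O(1)$-height case directly), so that the assembled tree still has total height $O(D^*)$.

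Within a single recursion level, once a portal set is available (implicitly, via the LP below), the job is to cover the required number of terminals by low-depth subtrees rooted at portals so that no portal roots more than $\tilde O(B^*)+O(\sqrt k/\log\log k)$ of them --- a \emph{set cover under a partition matroid} whose parts are indexed by portals. The natural covering LP is feasible with per-portal load $B^*$ (certified by $T^*$), and randomized rounding with an $O(\log k)$ boost, followed by Chernoff bounds on the per-portal loads, yields an integral cover of a constant fraction of the targets at per-portal load $\tilde O(B^*)$; iterating $O(\log k)$ times on the residual terminals covers them all. The ``sets'' here --- directed, low-depth, low-degree subtrees rooted at a given vertex and reaching many terminals --- are not shortest-path trees and must themselves be produced, which I would do greedily by repeatedly extracting the densest reachable bunch (a directed network-decomposition step). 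This extraction step is exactly where the directed setting forces an additive rather than a purely multiplicative loss.

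The main obstacle, then, is the interplay between this directed densest-bunch extraction and the height bookkeeping: because recursing on height is not allowed, the $D^*$ budget must be sliced into $O(\log\log k)$ slabs up front, and the greedy extraction inside each slab has to be shown to waste only the $\Theta(\sqrt k/\log\log k)$ additive degree allotted to that slab --- it is precisely the balance between how much a level branches and how many terminals each branch must still cover that pins the total additive term at $\tilde O(\sqrt k)$. Once the tree is built, the stated $\tilde O(k^{1/2})$-additive guarantee for the Minimum Time Telephone $k$-Multicast problem follows from the poise-to-schedule correspondence recalled earlier: a tree of poise $p$ yields a telephone schedule using $\tilde O(p)$ rounds.
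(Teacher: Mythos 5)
Your proposal shares only the outer shell with the paper's argument (guess $B^*,D^*$; use set coverage under a partition matroid with parts indexed by potential parents and per-part bound $B^*$, certified feasible by $T^*$), but the core of your plan --- a portal/skeleton decomposition of $T^*$ followed by an $O(\log\log k)$-level recursion with height budgets sliced per level --- contains genuine gaps. First, the structural claim is not established and is imprecise as stated: if you define the skeleton as the vertices of $T^*$ whose subtrees contain at least $\sqrt{k}$ terminals, it indeed has at most $\sqrt{k}$ leaves, but the subtree below such a leaf can contain up to $B^*\cdot\sqrt{k}$ terminals (each child subtree has fewer than $\sqrt{k}$, but there are up to $B^*$ children), and terminals hanging in small subtrees off \emph{internal} skeleton vertices are assigned to no portal at all; so ``$O(\sqrt{k})$ portals each serving $O(\sqrt{k})$ terminals'' does not follow. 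Second, the recursion does not control degree: the solutions returned for the $\Theta(\sqrt{k})$ sibling subproblems need not be vertex-disjoint, so a vertex shared by many of them can accumulate degree $\Theta(\sqrt{k})\cdot(\tilde O(B^*)+k^{1/4})$, destroying the additive $O(\sqrt{k})$ bound; you give no mechanism (disjointness, contraction, or charging) to prevent this. Third, the height-slab fix is self-defeating: a subtree of $T^*$ serving a portal's terminals can itself have depth up to $D^*$, so restricting a recursion level to a $1/\mathrm{polyloglog}(k)$ fraction of $D^*$ removes the very certificate of feasibility ($T^*$) that your covering LP and your structural claim rely on. Fourth, your dismissal of the regime $B^*=\Omega(\sqrt{k})$ assumes a $\tilde O(1)$-multiplicative degree/height approximation for \emph{directed} Steiner $k$-trees, which is not known (it is essentially the open problem motivating this work), so that regime is not actually handled. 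Finally, the ``densest reachable bunch'' extraction that is supposed to produce the low-degree, low-depth sets for the matroid cover is left unspecified; the degree bound on the $C$-side is exactly the delicate point, and you give no reason the extracted subtrees have degree $O(\sqrt{k})$.

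For contrast, the paper needs no recursion and no height slicing: it greedily packs vertex-disjoint trees of height at most $D^*$ each containing exactly $\sqrt{k}$ terminals. If at least $\sqrt{k}$ such trees exist, attaching any $\sqrt{k}$ of them to $r$ by shortest paths already gives degree $O(\sqrt{k})$, height $O(D^*)$, and $k$ terminals. Otherwise the residual set $C$ is a $\sqrt{k}$-packing, i.e.\ every $c\in C$ reaches fewer than $\sqrt{k}$ terminals within distance $D^*$ in $G[C]$; this single fact is what bounds the degree contributed on the $C$-side, while $O(\log k)$ rounds of partition-matroid set coverage (with sets indexed by edges $(a,c)$, $a\in A$, and per-part bound $B^*$) bound the degree on the $A$-side by $O(\log k)\cdot B^*$, with height $O(D^*)$ throughout. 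You would need to either adopt that one-shot packing-plus-cover structure or supply proofs for the portal decomposition, the cross-subproblem degree accounting, and the height certification before your outline could be considered a proof.
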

The second part of the statement follows from \Cref{lem:poise-to-mtm}. 
%\rnote{Spend more sentences on describing what \cite{rohit1} actually showed and for which problem and then continue with our comparison. Also contrast with prior additive bounds for directed case of Guy and Elkin.} 

In \cite{rohit1}, a multiplicative $O(\sqrt{k})$-approximation is given for the directed Min-Max Degree $k$-Tree problem, which asks to find a tree spanning $k$ terminals while minimizing the maximum degree. Their algorithm iteratively finds trees containing $\sqrt{k} \cdot B^*$ terminals and uses flows to connect them to the root. Our directed result is more general than that of~\cite{rohit1} 
in that it can handle both degree bounds and height bounds.
Moreover, our approximation for the degree is stronger, since we get an additive approximation of $O(\sqrt{k})$. Therefore, it may be better than the approximation of \cite{rohit1} in the case where $B^*$ is large. 
Our approximation ratio for the height is constant. 

Our result is also more general than the $\tilde O(\sqrt{t})$-additive approximation for the directed MTM of \cite{g24}, as it handles the $k$-tree version of the problem, and recovers the same $\tilde O(\sqrt{t})$-additive approximation in the case $k = t$ (up to logarithmic factors). In \cite{g24}, the so-called \emph{multiple set-cover} problem is used, a variant of set cover, while our result uses maximum coverage subject to a matroid constraint. 

For undirected graphs, we give an 
$\tilde O(t^{1/3})$ approximation, which is a better ratio in the worst case if $k$ is close to $t$. This represents progress toward closing the gap between the approximability of undirected and directed MTM, since in \cite{g18} it is shown that the undirected $k$-MTM problem lies between undirected and directed MTM in terms of approximability. 
\begin{theorem}
\label{t2}
The Minimum Poise Steiner $k$-tree 
problem on undirected graphs admits a polynomial time $\tilde O(t^{1/3})$ approximation and, therefore, the Minimum Time Telephone $k$-Multicast problem
admits the same approximation.
\end{theorem}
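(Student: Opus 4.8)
The plan is to run two algorithms and return the cheaper $k$-tree; \Cref{t2} then follows because the two guarantees meet at $k=t^{2/3}$. As preprocessing I would guess the optimal parameters $B^*$ and $D^*$ (only $O(|V|^2)$ possibilities, and a bad guess is detected when no valid tree is produced), so both may be assumed known, and I write $p^*=B^*+D^*$.

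For the regime $k\le t^{2/3}$, I would use the reduction of~\cite{g18}: up to an $O(\polylog k)$ factor, the undirected Minimum Poise Steiner $k$-Tree problem is a special case of the directed Minimum Poise Steiner $k$-Tree problem (through the group-Steiner formulation on $k$ groups). Applying \Cref{t1} to that directed instance returns a tree of height $O(D^*)$ and degree $\tilde O(B^*)+O(\sqrt k)$; pulling this back through the reduction gives an $\tilde O(\sqrt k)$-approximation for the poise, which is $\tilde O(t^{1/3})$ whenever $k\le t^{2/3}$.

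For the regime $k>t^{2/3}$ (which alone already suffices for the theorem), I would build the $k$-tree greedily over $\tilde O(t^{1/3})$ rounds. Maintain a partial tree $T$ rooted at $r$, of height $O(D^*)$, reaching a terminal set $C$ with $|C|<k$. In a round, first apply a greedy network decomposition to partition a neighborhood of $T$ into clusters of diameter $O(D^*)$, each with a designated hub; then set up a maximum-coverage instance whose elements are ``short bushy subtrees'' (height $O(D^*)$, degree $O(B^*)$, rooted at a hub adjacent to $T$), subject to a partition matroid that allows at most $\tilde O(B^*)$ chosen subtrees to be attached at any single vertex of $T$ — this matroid constraint is precisely what keeps the degree bounded. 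Decomposing $T^*$ at a suitably chosen intermediate level certifies that there is a feasible collection of such subtrees covering many of the terminals outside $C$, so the $(1-1/e)$-approximation for maximum coverage under a matroid constraint harvests a good fraction of them, which we attach to $T$. Each round adds $\tilde O(B^*)$ to the degree and keeps the height $O(D^*)$, so after $\tilde O(t^{1/3})$ rounds we have $|C|\ge k$ and a tree of poise $\tilde O(B^*)\cdot t^{1/3}+O(D^*)=\tilde O(t^{1/3})\cdot p^*$.

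The heart of the matter — and the step I expect to be the main obstacle — is the per-round covering subroutine and, especially, the quantitative claim that from $T^*$ one can certifiably extract, in a single round, a feasible collection of short bushy subtrees covering a $\tilde\Omega(t^{-1/3})$ fraction of the still-coverable terminals: unlike the cost version of group Steiner tree, the bounded-degree, bounded-height variant does not enjoy a polylogarithmic coverage guarantee, and it is exactly this $t^{1/3}$-type loss, multiplied by the $\tilde O(B^*)$ degree spent per round, that yields the final factor and sets the round count. Nailing this down requires (i) choosing the decomposition level of $T^*$ and the cluster diameters so that the certified subtrees are simultaneously short, low-degree, and attachable to $T$; (ii) controlling the degree accumulated at hubs that are reused across the $\tilde O(t^{1/3})$ rounds via the partition-matroid capacities; and (iii) ensuring that gluing the harvested collections does not blow up the height — naively chaining subtrees across rounds would give height $\tilde O(t^{1/3})\cdot D^*$, so the attachments must be arranged to stay close to the root rather than in series.
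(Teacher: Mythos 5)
Your first regime ($k\le t^{2/3}$) is fine, and in fact simpler than you make it: an undirected instance can be bidirected and fed to \Cref{t1} directly (the detour through the reduction of \cite{g18} is unnecessary), and an additive $\tilde O(\sqrt k)\le \tilde O(t^{1/3})$ guarantee does imply the multiplicative bound in that range. The problem is the second regime, which you yourself identify as the heart of the matter and then leave unproved: the claim that in each round one can certifiably harvest, via matroid-constrained maximum coverage over ``short bushy subtrees,'' a $\tilde\Omega(t^{-1/3})$ fraction of the still-needed terminals while spending only $\tilde O(B^*)$ degree. Two distinct things go wrong there. First, your set system is not an implementable oracle: the candidate sets are themselves bounded-degree, bounded-height subtrees, so even the single greedy step (find one poise-bounded subtree rooted at a neighbor of $T$ covering the most uncovered terminals) is essentially the problem you are trying to solve, whereas the paper's \texttt{PMCover} instance has polynomially many sets, one per edge $(a,c)$ with $a\in A$, $c\in C$, covering all terminals within distance $D^*$ of $c$ in $G[C]$. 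Second, and more fundamentally, no per-round guarantee of the form ``cover a $t^{-1/3}$ fraction of what remains to be covered'' is established, and the paper never proves one; its progress measure is different.

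The paper's route around exactly this obstacle is a discarding argument that makes progress against $t$, not $k$ (which is also why the final ratio is stated in $t$). It greedily packs vertex-disjoint small trees of $t^{1/3}$ terminals each; if fewer than $t^{1/3}$ exist it is already done with an \emph{additive} $t^{1/3}$ bound via \Cref{cor1}. Otherwise it contracts each small tree into a super-terminal and distinguishes two cases: either some vertex reaches $t^{1/3}$ super-terminals within distance $D^*$ (a ``large tree'' of $t^{2/3}$ terminals, attached to $r$ by one shortest path), or the super-terminals form a packing, in which case \texttt{PMCover} is applied to cover at least $\mu=\lceil k'/t^{1/3}\rceil$ super-terminals --- at least as many original terminals as $T^*$ covers among the small trees --- after which \emph{all} terminals of \emph{all} small trees are discarded, covered or not. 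Either way $t^{2/3}$ terminals leave the instance per iteration, so there are at most $t^{1/3}$ iterations, each adding $O(\log k)\cdot B^*$ degree to $R$ and $O(t^{1/3})$ degree to newly absorbed vertices, with height $O(D^*)$ maintained because every attachment is within distance $O(D^*)$ of $r$. It is this contraction-plus-discard mechanism (available only in the undirected setting, where touching \emph{any} vertex of a small tree suffices) that replaces the per-round coverage-fraction claim your proposal needs but does not supply; without it, or some substitute certification, your second regime does not go through.
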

 
The $O(\sqrt{k})$ additive ratio can be as bad as the
$\Omega(\sqrt{t})$ multiplicative ratio, if $B^*$ is constant and
$k=\Omega(t)$.
Therefore, in the worst case, an $\tilde O(t^{1/3})$ approximation is  
a better ratio.
In addition, if $B^*=o(t^{1/6})$ and $k=\Omega(t)$, 
the multiplicative ratio gives a better additive ratio.

For the bounded degree Directed Steiner Tree problem on bounded treewidth graphs, we give a polylogarithmic approximation algorithm which returns a Steiner tree violating the degree constraints by only a polylogarithmic factor. 

\begin{restatable}{theorem}{dbdst}
\label{thm:dbdst}
The bounded degree Directed Steiner Tree problem on bounded treewidth graphs admits a polynomial time $O(\log^2 n)$ approximation algorithm with an $O(\log^2 n)$ factor violation in the degree bounds.
\end{restatable}

This extends the work of Dinitz, Kortsarz, and Li~\cite{DKL24} who gave an analogous algorithm for the bounded-degree Group Steiner Tree problem. 

\subsection{Technical Overview}
For the directed case, our techniques are based on \cite{g24}.
However, our problem is harder since it is not clear which $k$ terminals 
to choose. An important difference is that we use an approximation algorithm to maximize coverage (a submodular function) under matroid constraints \cite{check33}.  
The multiplicative approximation for the undirected case builds on this and requires
several graph decomposition techniques to be carefully combined.

For both results, we denote the maximum degree as $B^*$ and height as $D^*$ of an optimal minimum poise tree $T^*$. It can be assumed that $D^*$ and $B^*$ are known by trying all possibilities, as there are only polynomially many. Moreover, since $D^*$ is known, all vertices of distance greater than $D^*$ from the root may be removed.
%\dnote{some of the above is repeated in the prelims. Maybe fine}

\paragraph{Directed Min-Poise Steiner $k$-Tree}
In order to get an $O(\sqrt{k})$ additive approximation for the directed min-poise Steiner $k$-tree problem, we employ a greedy strategy. We iteratively find a collection of vertex-disjoint trees, each covering (that is, containing) exactly $\sqrt{k}$ terminals and of height at most $D^*$, until no more can be found. We call these \textit{good trees}. 

In the case where at least $\sqrt{k}$ many good trees are found, taking any $\sqrt{k}$ of the good trees along with the shortest paths from the root $r$ to the roots of each of these trees yields an additive $O(\sqrt{k})$-approximation. This yields a subgraph (not necessarily a tree, since the shortest paths may not be disjoint from the good trees) with maximum out-degree at most $2\sqrt{k}$, and radius (maximum distance from $r$) at most $2\cdot D^*$. Moreover, the subgraph contains $k$ terminals. Now the non-disjointness may be overcome by returning a shortest path tree spanning this subgraph. This gives the desired approximation. 

In the other case that fewer than $\sqrt{k}$ good trees are found, we may still connect them to the root via shortest paths. This gives a subgraph of low poise, but does not yet cover $k$ terminals. If $k_1 < k$ terminals are covered, we must determine how to cover $k - k_1$ additional terminals without inducing high degree or height. 

This is the main technical contribution of the directed result: we can recast the covering of $k - k_1$ additional terminals as a set cover instance, and the desired poise guarantees can be obtained by imposing a partition matroid constraint on the sets in the instance. Then, an algorithm for approximating submodular function maximization subject to a matroid constraint~\cite{check33} is applied. To the authors' knowledge, partition matroid constrained maximum coverage has not previously been used for multicasting problems. The matroid constraint guarantees that the additional degree incurred on already-chosen vertices by covering these terminals is low. The degree incurred on newly added vertices is shown to be bounded because none of these could be chosen to be the root of a good tree. 

\paragraph{Partition Matroid Max Coverage Procedure}
Suppose we are given a partition of the graph into $A \cup C = V$ with $r \in A$, such that all of $A$ is reachable with low poise and contains $k_1$ terminals. We want to cover at least $k - k_1$ terminals in $C$ with low poise, and we know that there exists a tree $T^*$ rooted at $r$ which does so.

Say that a node $c \in C$ \textit{covers} all the terminals in $C$ that it can reach within distance $D^*$. In this way, we define a set cover instance over the ground set of terminals in $C$ in which each set is identified by an edge $(a, c)$ between a node $a \in A$ and a node $c \in C$. The set corresponding to $(a, c)$ contains all terminals covered by $c$. Defining the sets this way allows us to enforce degree constraints on the nodes in $A$, since the sets can be partitioned by their member in $A$. That is, we form a partition with the parts $X(a) = \{(a, c) : c \in C, ac \in E\}$ for each $a \in A$. We now impose the constraint that at most $B^*$ sets may be chosen from any part $X(a)$, reflecting the desired degree constraint. A \emph{partition matroid} captures choosing at most a certain number of elements from each part of a partitioned set. Hence we have described a set cover instance with a partition matroid constraint and a coverage requirement of $k-k_1$. 

The problem of selecting sets to maximize the number of terminals covered subject to the matroid constraint is a special case of submodular function maximization subject to a matroid constraint. Moreover, $T^*$ provides a certificate that there exists a collection of sets satisfying the matroid constraint and covering at least $k - k_1$ terminals in $C$.
Hence, we may apply the $(1 - \frac{1}{e})$-approximation for this problem~\cite{check33} (the simple greedy strategy, which gives a $\frac{1}{2}$-approximation~\cite{Fisher1978} would also suffice here) to find a collection of sets satisfying the matroid constraint and covering at least $(1 - \frac{1}{e}) \cdot (k - k_1)$ terminals in $C$. 

Given the choice of sets $(a, c)$ by the algorithm, we identify a set of edges that may be added to extend our subgraph to cover these terminals. These newly covered terminals are then removed, and the process repeated. In each round, we can cover a constant fraction of the desired number of terminals, so we need only $O(\log k)$ rounds. Moreover, any given round induces additional degree of only $B^*$ on nodes in $A$. The degree induced on nodes in $C$ depends on the size of the parts $X(a)$, and this can be bounded in our applications (e.g., by $\sqrt{k}$ in the directed setting described above, since the greedy strategy ensures that all nodes in $c$ can reach at most $\sqrt{k}$ terminals within distance $D^*$). 
Finally, the distance from the root of any node added is $O(D^*)$, so in total the poise of the subgraph remains low. In the end, we again output a shortest path tree spanning this subgraph. 

\paragraph{Improvement in Undirected Graphs}
%\dnote{Copy-pasted from the "algorithm intuition" from section 5. We could either remove that from section 5, or keep both and reword them so they aren't identical.}
In the undirected setting, the result can be improved by taking advantage of the fact that if a good (low-poise) tree covering many terminals is found, then we need only cover \emph{any} node in that tree in order to cover all of those terminals with low poise (as opposed to the directed case where we would have to cover the \emph{root} of that tree). Essentially, we may contract the tree and treat the contracted node as containing many terminals. 

Specifically, we will maintain a set $R$ of nodes that we have covered so far with low poise (by contracting, we can think of this simply as the root $r$). We first group the terminals in the remaining graph $C = V \setminus R$ as before by greedily finding disjoint trees of low poise, now each containing $t^{1/3}$ terminals, called \emph{small trees}. Note that some terminals may not lie in any small tree. If the algorithm finds fewer than $t^{1/3}$ small trees, then the same matroid-constrained covering procedure from above can be applied to immediately get an \textit{additive} $O(t^{1/3})$-approximation. 

On the other hand, if there are many small trees, we show that progress can be made by either covering or discarding a large number of terminals at once. If we are able to aggregate $t^{1/3}$ small trees into a single tree within a distance $D^*$, we have covered $t^{2/3}$ terminals and hence made sufficient progress in coverage: we can repeat this at most $t^{1/3}$ times to finish, inducing at most $t^{1/3}$ degree at the root to reach these trees. However, we may have the additional complexity of the optimal tree containing terminals that are not in one of these small trees we computed in $C$. 
%In the extreme case, there could be no terminal of the optimal tree in any of the small trees in $C$. 
We handle this case by using the matroid-constrained coverage procedure to extract as many terminals as any optimal solution might cover from the small trees while staying within the degree and height bounds, and then discarding all the terminals from \emph{all} of the unused small trees. Since the number of small trees (each with $t^{1/3}$ terminals) is $\Omega(t^{1/3})$, this allows us to bound the number of such discarding iterations by $O(t^{1/3})$. In summary, we employ $O(t^{1/3})$ iterations of either covering or discarding $t^{2/3}$ terminals in the algorithm leading to the claimed $O(t^{1/3})$ multiplicative guarantee. Over the course of these iterations, the total degree accumulated by any node will be at most $\tilde O(t^{1/3}) \cdot B^*$ (Note this guarantee is now multiplicative, since a node can gain $\tilde O(B^*)$ degree in each of the $t^{1/3}$ covering iterations). 

% \dnote{Does this need more detail, or is it about right for the technical overview?}
% On the other hand, if there are many such trees, we contract them and show how to cover a sufficiently large number of them by either finding a single large tree reaching $t^{1/3}$ of them, or again by applying the matroid procedure. In either case, we may then remove \emph{all} of the terminals from these small trees, contract the newly covered nodes into $R$, and iterate the entire process to cover the remaining terminals. In each iteration, we show the total number of terminals discarded is large, so there cannot be more than $t^{1/3}$ iterations. In particular, over the course of these iterations, the total degree accumulated by any node will be at most $\tilde O(t^{1/3}) \cdot B^*$ (note this guarantee is now multiplicative, since a node can gain $\tilde O(B^*)$ degree in each of the $t^{1/3}$ iterations). 

Finally, we remark that the improved guarantee in this setting is in terms of $t$, the total number of terminals, rather than $k$. This is because our algorithm relies on removing a large number of terminals from the entire set of $t$ terminals, without necessarily covering all of them. 

\paragraph{Bounded Treewidth Graphs}
In the setting of graphs with bounded treewidth, we use the techniques of Dinitz, Kortsarz, and Li~\cite{DKL24} to give an improved approximation for the minimum-weight Directed Steiner Tree problem with degree bounds (as opposed to a poise bound). They define the so-called Tree Label Problem in which the nodes of a tree must be labeled in a way which satisfies consistency, covering, and cost constraints. Using LP rounding, they give a randomized algorithm that finds a consistent labeling that approximately satisfies each covering and cost constraint with good probability. 

We reduce the bounded degree Directed Steiner Tree Problem to the Tree Label Problem, and then use their Tree Label algorithm as a black box to get the desired result. For a graph with bounded treewidth, the tree decomposition is used as the input to the Tree Label instance. The label sets for a node (bag) in the tree decomposition represent the connectivity relations between graph vertices in the corresponding bag. Then, the covering constraints capture the connection of the root to each terminal, while the cost constraints are used to bound both the degrees and the total cost of the solution. 

Our reduction, while very similar to that of \cite{DKL24} for bounded degree Group Steiner Tree, differs in our choice of label sets. Since the problem of \cite{DKL24} is undirected, their labels include a partition of the vertices in a bag that represents the connected components. Our problem is on a directed graph, so we have to more carefully represent the reachability relationship between the vertices in the bag.

\section{Preliminaries}

\paragraph{Steiner Poise} Let $\dist(u,v)$ denote the number of edges in the shortest path 
from $u$ to $v$ in $G$.
We denote  by $G[U]$ the graph induced by $U$, and by 
$\dist_{G[U]}(u,v)$ the distance from $u$ to 
$v$ in the graph $G[U]$.
Recall that we denote the minimum poise tree by $T^*$, its maximum degree by $B^*$, and its height by $D^*$. 
%We may assume that $D^*$ and $B^*$ are known, by trying all possibilities. 

\begin{assumption}\label{ass:dist}
Removing vertices of distance more than $D^*$ from the root $r$ in $G$ does not change the optimal solution. Hence, we will assume for the rest of the chapter that $G$ only contains vertices of distance at most $D^*$ from $r$. 
\end{assumption}

\begin{remark}
For the rest of the chapter, we assume that quantities 
such as $\sqrt{k}$ are integral. 
Making the algorithm precise requires using $\lceil \sqrt{k}\rceil$.
However, the changes are minimal and elementary.
\end{remark}

For simplicity, we assume that 
every terminal has in-degree $1$ 
and out-degree $0$, by attaching new terminal vertices 
to every terminal (this only increases the poise by at most 2). 
For undirected graphs, we assume that terminals 
have degree $1$. Therefore, removing terminals 
can't turn a connected graph into a disconnected 
graph.

Now we give the proof of \Cref{lem:poise-to-mtm}, which follows identically to the analogous proof of \cite{g56} for the versions of these problems in which all terminals must be covered. 
\begin{proof}[Proof of \Cref{lem:poise-to-mtm}]
The algorithm for $k$-MTM is as follows: using the $\tilde O(f(k))$-additive approximation for Minimum Poise Steiner $k$-Tree, find a Steiner $k$-Tree $T$. Suppose $T$ has poise $p$. Then using the result of \cite{g56}, find a broadcast schedule on $T$ that informs all of the nodes in $T$ of a message originating at $r$ using no more than $O(\frac{\log \abs{T}}{\log \log \abs{T}}) \cdot p$ rounds. This clearly gives a $k$-multicast schedule on the original graph taking the same number of rounds. 
 
Consider an optimal $k$-multicast schedule, taking $\OPT$ rounds. It suffices to show that $p \leq \tilde O(\OPT) + \tilde O(f(k))$. Using the optimal schedule, define the tree $T^\OPT$, rooted at $r$, by connecting each informed vertex $u \neq r$ to the unique vertex from whom it received the message. 
Say $T^\OPT$ has height $D^\OPT$ and maximum degree $B^\OPT$, then its poise is $p^\OPT :=B^\OPT+D^\OPT$. Since at every round, each informed vertex can send the message 
to at most one neighbor, $\OPT \geq B^\OPT$ and $\OPT \geq D^\OPT$. Hence, in general we have
$\OPT \geq \nicefrac{p^\OPT}{2}$. 

But now if $p^*$ is the poise of the minimum poise Steiner $k$-tree, then $p^* \leq p^\OPT$, and hence we know by the approximation algorithm that 
\[
    p \leq \tilde O(p^*) + \tilde O(f(k)) \leq \tilde O(\OPT) + \tilde O(f(k))
\]
as desired. 
\end{proof}

\paragraph{Max Coverage with a Matroid Constraint} The input for the  Set Cover  problem is a
universe $\mcb{U}$ and a collection 
$\mcb{S}$ of sets $S_i \subseteq \mcb{U}$. We say that a set $S_i$  
{\em covers} all the elements that belong to this set. 
The goal is to find a sub-collection of sets 
$\mcb{S}'\subseteq \mcb{S}$ of minimum size that covers all elements, 
namely, $\bigcup_{S_i\in \mcb{S}'} S_i=\mcb{U}$. 
The Maximum Coverage problem under a matroid constraint has the input of 
Set Cover, and in addition, a matroid $\mcb{M}$ defined over the sets $\mcb{S}$. The goal is to select an independent set  $\mcb{I}$ in the Matroid so that  
$|\bigcup_{S_i\in \mcb{I}} S_i|$ is maximum.
A partition matroid instance divides  
$\mcb{S}$ into pairwise disjoint collections of sets 
 $\mcb{S}_i$, whose union is all 
 of $\mcb{S}$. For every collection $\mcb{S}_i$,
there is a bound $p_i$ on the number of sets that 
can be selected from $\mcb{S}_i$. A collection of sets containing at most $p_i$ sets from each $\mcb{S}_i$ is precisely an independent set in the partition matroid.
The goal is to find an independent set in the partition matroid 
that covers the largest number of elements.
This problem 
is a special case of maximizing a submodular function under matroid 
constraints. The greedy algorithm achieves a $\nicefrac{1}{2}$-approximation for this problem~\cite{Fisher1978}, and is sufficient for our purposes. It is also known that the problem 
admits a polynomial time $1-\nicefrac{1}{e}$-approximation
\cite{check33}, which may be used for improved constants. The procedure of 
\cite{check33} is one of the main tools in our algorithm.
We call this procedure the \texttt{Matroid} procedure.

\paragraph{Treewidth} A \emph{tree decomposition} of a graph $G = (V, E)$ is a tree $\bb T = (B, \bb E)$ where each node $b \in B$ is called a \emph{bag}, and is associated with a subset $X_b$ of $V$, satisfying:
\begin{enumerate}
    \item Every vertex is in at least one bag $X_b$.
    \item The set of bags containing a vertex $v \in V$ forms a connected subtree of $\bb T$.
    \item For every edge $e \in E$, there is a bag containing both endpoints of $e$. 
\end{enumerate}
The width of a tree decomposition is its largest bag size, minus 1, and the \emph{treewidth} of $G$ is the width of its minimum width tree decomposition.

\section{The Partition Matroid Cover Algorithm}

In the next two sections, our algorithms for both the directed and undirected cases define a disjoint partition of the graph vertices into $A \cup C = V$.
The root $r$ always belongs to $A$, and we will ensure that all of $A$ can be covered by a low poise tree rooted at $r$.
In this section, we discuss
how to cover sufficiently many terminals from $C$ with low poise by connecting them to the root through $A$.
We do this by defining an instance of the Max Coverage problem under a partition matroid constraint\footnote{
%\begin{assumption}
Note that the parameter $k$ represents the {\em remaining} number of terminals we need to cover.
%will always be the number of terminals we still need to span and will go down.
Given a partition $A,C$ we will assume that all terminals in $A$ have been spanned, and thus we need to cover 
$k$ terminals in $C$. 
That is, if $A$ has $k_1$ terminals for some $k_1<k$, we will set $k \gets k-k_1$. Note that we are guaranteed that 
$C\cap T^*$ contains at least $k-k_1$ terminals supplying a feasible solution.
%\dnote{This assumption is a little confusing later on. Maybe introduce some notation to make this better? Or maybe it's fine.}
%\end{assumption}
}.

\begin{definition}
\label{d2}
Define a Max Coverage instance as follows.
\begin{itemize}
\item The items are $S\cap C$ (the terminals in $C$).
\item The sets (also called \emph{pairs}) are 
$\mcb{S} = \{(a, c)\mid a\in A,c\in C, \text{ and } ac \in E\}$ where $(a, c)$ \emph{covers} a terminal $t \in S \cap C$ if $\dist_{G[C]}(c,t)\leq D^*$. 
\end{itemize}
\end{definition}
\noindent The partition matroid is defined as follows.
\begin{definition}
$\mcb{S}$ is partitioned into collections 
\[
    X(a)=\{(a, c)\mid c\in C  \text{ and }  ac\in E\}
\]
 for every $a \in A$. The bound on the number of sets to be chosen 
from $X(a)$ is $B^*$.
\end{definition}
By definition, the partition is disjoint and therefore, we have a valid partition matroid. Recall that $r\in A$. See \Cref{alg:pmcover} for a description of the Procedure \texttt{PMCover}.

\begin{algorithm}[h]	
    \caption{\texttt{PMCover}}
    \label{alg:pmcover}
    
    \Input{Graph $G(V, E)$ with terminals $S$ and $V$ partitioned into $A \cup C$, and a number $k$.}
    \Output{A collection of pairs of the form $(a, c)$ with $a \in A$ and $c \in C$.}
    \BlankLine

    $\mcb{E}'\gets \emptyset$, $S'\gets S\cap C$.\;
    \While{$k > 0$}{
        Define the partition matroid Set Coverage instance from $A,C,S'$ as above with sets $\mcb{S}'$ and apply Procedure \texttt{Matroid} of \cite{check33} to find an independent set of approximately maximum coverage. Let $\mcb{I}$ be the independent set it returns.\;
        
        $\mcb{E}'\gets \mcb{E}'\cup \mcb{I}$.\;
        
        Decrease $k$ by the number of terminals covered by $\mcb{I}$.\;

        Remove the terminals covered by $\mcb{I}$ from $S'$.\;
    }
    
    \Return $\mcb{E}'$.
\end{algorithm}

\subsubsection*{Analysis}
We will show that 
for every $a\in A$, $\abs{X(a)\cap \mcb{E}'} \leq O(\log k)\cdot B^*$.
This will be used to argue that if $(a, c)\in \mcb{E}'$, we later may make 
$a$ the parent of $c$ in the tree we build without incurring high degree. 

\begin{definition}
\label{d1}
Define a mapping from terminals in $T^*\cap C$  
to $\mcb{S}'$ as follows. 
For a terminal $t$, let  $a=a_t$ be the 
vertex $a\in A $ that is an ancestor 
of $t$ in $T^*$ and among them 
$\dist_{T^*}(a,t)$ is minimum. 
This vertex is well defined since $r\in A$ is the root of $T^*$.
Let $c=c_t$ be the child of $a$ in $T^*$ that is an ancestor of $t$.
Define $f(t)=(a, c)$.
\end{definition}

\begin{claim}
\label{c11}
There exists an independent set $\mcb{I}^*$ in the partition matroid that covers at least $k$ terminals in $C\cap S$.
\end{claim}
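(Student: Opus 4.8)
The plan is to exhibit the independent set $\mc{I}^*$ directly from the optimal tree $T^*$, using the mapping $f$ from \Cref{d1}. Concretely, I would set $\mc{I}^* = \{f(t) : t \in T^* \cap C \cap S\}$, i.e., the image under $f$ of all terminals of $C$ that appear in the optimal tree. Since $A$ contains $r$ and $C\cap T^*$ contains at least $k$ terminals (by the footnote/assumption: after decrementing $k$ by the $k_1$ terminals in $A$, feasibility of $T^*$ guarantees $\geq k$ terminals of $S$ lie in $C\cap T^*$), the set $\mc{I}^*$ is nonempty and I will argue it both covers $\geq k$ terminals and is independent in the partition matroid.

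First I would check coverage. Fix a terminal $t \in T^* \cap C \cap S$ with $f(t) = (a_t, c_t)$. By definition of $f$, $a_t \in A$ is the deepest ancestor of $t$ in $T^*$ that lies in $A$, and $c_t$ is the child of $a_t$ on the $T^*$-path to $t$; crucially, every vertex on the $T^*$-path from $c_t$ down to $t$ lies in $C$ (otherwise there would be a deeper ancestor of $t$ in $A$). Hence this entire subpath lives inside $G[C]$, and its length is at most the height of $T^*$, which is $D^*$. Therefore $\dist_{G[C]}(c_t, t) \leq D^*$, so the pair $(a_t, c_t)$ covers $t$ in the Set Coverage instance of \Cref{d2}. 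Thus $\mc{I}^*$ covers every terminal in $T^* \cap C \cap S$, and there are at least $k$ of them, giving the coverage bound.

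Next I would verify that $\mc{I}^*$ is independent in the partition matroid, i.e., that for each $a \in A$ we have $|\mc{I}^* \cap X(a)| \leq B^*$. Consider the pairs of $\mc{I}^*$ lying in $X(a)$: these are exactly the pairs $(a, c)$ where $c$ is a child of $a$ in $T^*$ that is an ancestor of some terminal mapped to $a$. In particular, each such $c$ is an out-neighbor (child) of $a$ in $T^*$, and distinct pairs $(a,c),(a,c')$ in $\mc{I}^*\cap X(a)$ correspond to distinct children $c\neq c'$ of $a$. Since the out-degree of $a$ in $T^*$ is at most $B^*$, there are at most $B^*$ such children, so $|\mc{I}^* \cap X(a)| \leq B^*$, as required. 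This shows $\mc{I}^*$ respects the partition bound on every part, hence is independent.

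I do not expect a serious obstacle here; the one point demanding care is the claim that the $T^*$-path from $c_t$ to $t$ stays entirely within $C$, which is precisely what the "deepest $A$-ancestor" choice in \Cref{d1} buys us — I would make sure to spell that out, since it is what makes the distance bound hold in the induced graph $G[C]$ rather than just in $G$. A minor secondary point is handling the edge case where $c_t = t$ (the terminal is itself the relevant child), but then $\dist_{G[C]}(c_t,t)=0\leq D^*$ trivially. With these in hand, $\mc{I}^*$ witnesses the claim.
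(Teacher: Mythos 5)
Your proposal is correct and follows essentially the same argument as the paper: take $\mc{I}^*$ to be the image of $f$ over the terminals of $T^*\cap C\cap S$, argue the $T^*$-path from $c_t$ to $t$ stays in $G[C]$ (by the deepest-$A$-ancestor choice) so each terminal is covered, and bound $|\mc{I}^*\cap X(a)|$ by the out-degree $B^*$ of $a$ in $T^*$. The only cosmetic difference is that the paper bounds the $c_t$-to-$t$ path length by $D^*-1$ while you use $D^*$, which is equally sufficient for the coverage definition.
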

\begin{proof}
We show that every terminal
in $t\in T^*\cap C$ is covered by some set.
Let $a=a_t$ and let $c = c_t$. Since $a$ has minimum 
distance to $t$ from all vertices in $A$, the path from 
$c$ to $t$ belongs to $G[C]$.
The number of edges in the path between $c$ and $t$ is at most $D^*-1$.
This implies that 
the set $(a, c)$ covers $t$.
Create a set $\mcb{I}^* = \{f(t) \mid t \in T^* \cap S \cap C\}$.
We note that $f(t)=f(t')=(a, c)$ may hold for two different terminals, but $\mcb{I}^*$ includes every such pair $(a, c)$ once  
(namely, $\mcb{I}^*$ is a set and not a multiset). 
For any $a \in A$, the number of different pairs of the form $(a,c_1),(a, c_2),\ldots$ in $\mcb{I}^*$ can't be more than $B^*$, because every such pair increases $a$'s degree in $T^*$ by $1$. Thus, $\mcb{I}^*$ is independent in the partition matroid. 
Since all terminals in $T^*\cap C$ are covered, $k$ terminals are covered.
\end{proof}

\begin{claim}
\label{c5}
Procedure \texttt{PMCover} returns a collection of pairs $\mcb{E}'$ so that for every $a\in A$, $X(a)\cap \mcb{E}'=O(\log k)\cdot B^*$ 
and $\mcb{E}'$ covers $k$ terminals.
Thus if in some tree, vertex $a \in A$ is made the parent of all $c$ 
for which $(a,c)\in \mcb{E}'$, 
the degree of $a$ will be bounded by 
$O(\log k)\cdot B^*$.
\end{claim}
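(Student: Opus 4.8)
The plan is to prove Claim~\ref{c5} by combining the feasibility certificate from Claim~\ref{c11} with a standard geometric-decay argument for the greedy-style outer loop in \texttt{PMCover}. The two things to establish are: (i) the coverage bound — $\mc{E}'$ covers $k$ terminals — which is essentially the loop termination condition, and (ii) the degree bound — $|X(a)\cap\mc{E}'|=O(\log k)\cdot B^*$ for every $a\in A$ — which requires bounding the number of iterations of the while loop and noting that each iteration adds at most $B^*$ pairs to any single part $X(a)$.

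**First I would** handle the degree bound by bounding the number of iterations. The key observation is that Claim~\ref{c11} applies not just to the original instance but to every instance constructed during the loop: at the start of an iteration, let $k'$ be the current (residual) value of the counter $k$ and $S'$ the current terminal set; the tree $T^*$ restricted to the still-uncovered terminals in $C$ still gives an independent set $\mc{I}^*$ in the (same) partition matroid covering at least $k'$ of them (the argument of Claim~\ref{c11} is unaffected by deleting already-covered terminals from the ground set, since we only shrink the sets). Hence the \texttt{Matroid} procedure, being a $(1-\frac1e)$-approximation (or even the $\frac12$-approximation of \cite{Fisher1978}), returns $\mc{I}$ covering at least $(1-\frac1e)\,k'$ terminals, so the counter drops by at least a constant factor each iteration. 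Since the counter starts at $k$ and the loop stops when it hits $0$ (and it is integer-valued, so "drops by a constant fraction" eventually forces it to $0$), the loop runs at most $O(\log k)$ times. Each iteration contributes an independent set $\mc{I}$ to $\mc{E}'$, and independence in the partition matroid means $|X(a)\cap\mc{I}|\le B^*$ for every $a\in A$; summing over the $O(\log k)$ iterations gives $|X(a)\cap\mc{E}'|\le O(\log k)\cdot B^*$. The coverage claim is immediate: the loop only exits when $k\le 0$, i.e., when the pairs collected have covered the full required number of terminals (the counter is decremented exactly by the number of newly covered terminals each round).

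**The last sentence** of the claim is then a direct consequence of the two definitions: if in some tree we make $a$ the parent of every $c$ with $(a,c)\in\mc{E}'$, then $a$'s children-from-$C$ are in bijection with $X(a)\cap\mc{E}'$ (distinct pairs $(a,c)$, $(a,c')$ have distinct second coordinates), so $a$ gains at most $O(\log k)\cdot B^*$ to its degree.

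**The main obstacle** I anticipate is the subtle point in the iteration-count argument that the residual instance still admits a feasible solution of value equal to the residual counter — one must check that the partition matroid stays fixed across iterations (it does, since $A$, $C$, $E$ and $B^*$ never change — only $S'$ shrinks) and that $T^*$'s leftover terminals still certify feasibility for the smaller target. There is also a minor bookkeeping worry: the counter $k$ after the first iteration is no longer the "true" residual demand unless the \texttt{Matroid} output is at most the feasible optimum, but since we only need a lower bound on progress (a constant fraction of the current counter) and an upper bound on total coverage (exactly when the counter crosses zero), neither direction breaks. I would also note explicitly that when the counter is very small (say below the reciprocal of the approximation gap) a single further iteration finishes it off, so the $O(\log k)$ bound is clean.
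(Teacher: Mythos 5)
Your proof is correct and takes essentially the same route as the paper's: bound the number of while-loop iterations by $O(\log k)$ via geometric decay of the residual demand (using Claim~\ref{c11} as the feasibility certificate for the \texttt{Matroid} subroutine), and observe that each iteration's independent set adds at most $B^*$ pairs to any part $X(a)$, with coverage of $k$ terminals following from the loop's termination condition. Your explicit verification that the residual instance (with $S'$ shrunk but the partition matroid unchanged) still admits an independent set covering the remaining demand is a subtlety the paper leaves implicit, but the argument is the same.
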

\begin{proof}
Since Procedure \texttt{Matroid} returns an independent set in the partition matroid, at every iteration we have
$|X(a)\cap \mcb{I}|\leq B^*$.
Claim \ref{c11} and the guarantee of 
Procedure \texttt{Matroid} by \cite{check33} imply that  
$(1-\nicefrac{1}{e})k$ terminals are covered.
Let $k_{\text{or}} \leq k$ be the original number 
of terminals to be covered 
and $k_{\text{new}}$ the number of terminals 
to be covered in a given iteration.
Then in the next iteration, 
\[
    k_{\text{new}}\gets k_{\text{new}}-(1-\nicefrac{1}{e})k_{\text{new}}=\frac{k_{\text{new}}}{e}.
\]
Therefore, after 
$i$ iterations, $k_{\text{or}}/e^i$ terminals 
remain to be covered.
Hence, the number of iterations is $O(\log k)$.
The claim follows.
\end{proof}

\section{Approximating the poise for directed graphs}
Our algorithm maintains a set $A$ (initialized with the root $r$) containing the terminals covered with low poise so far, and $C = V \setminus A$. 
Consider a set $C$ and the graph $G[C]$ induced 
by $C$.
\begin{definition}
\label{d3}
Denote by $T(c)$ the coverage tree of $c$ in $G[C]$ formed by taking a shortest path from $c$ to every terminal within distance $D^*$. A vertex $c\in C$ is called 
$\rho$-good (with respect to $C$) if there are at least 
$\rho$ terminals in $T(c)$. A $\rho$-good tree 
is a tree rooted at some $c$ (not necessarily $T(c)$) with {\em exactly} $\rho$ terminals and height at most $D^*$.
\end{definition}
By assumption, the out-degree of terminals 
is $0$. Therefore all terminals are leaves.
Since we may discard non-terminal leaves, a 
$\rho$-good tree contains exactly $\rho$ leaf terminals.

\begin{definition}
A  set  $C$ of  vertices, is a $\rho$-packing if there is no $\rho$-good vertex in $C$.
\end{definition}
\begin{definition}
Let $\{T_i\}$ be a collection of 
vertex disjoint trees and  let $A$ be the set 
of vertices in $\bigcup_i T_i$. Let $C=V-A$. 
Then $A,C$ is a $\rho$-additive partition if: 
\begin{enumerate}
\item The trees $T_i$ are $\rho$-good with respect to $V$, and are all vertex-disjoint. 
\item There are at most $\rho$ trees $T_i$.
\item $C$ is a $\rho$-packing.
\end{enumerate}
\end{definition}

Let $q_i$ be the root of $T_i$.
Intuitively, since there are at most 
$\rho$ trees $T_i$, we can add a shortest path $P_i$ 
from the root $r$ to each $q_i$, giving a tree rooted at $r$ 
with low poise covering terminals in $A$. In addition, since 
$C$ is a $\rho$-packing, at least $k$ (meaning the number of \textit{remaining} terminals to cover after covering those in $A$) of $C$'s terminals 
can be covered with some collection of low poise trees. In particular, since every $c \in C$ is not $\rho$-good, all of the coverage trees $T(c)$ have max degree at most $\rho$. 

The algorithm attempts to find a $\rho$-additive partition. It greedily finds $\rho$-good trees, and removes them until the set $C$ that remains is a $\rho$-packing. Then the procedure \texttt{PMCover} can be used to connect the low poise trees covering $A$ and $C$. However, there may be too many $\rho$-good trees in $A$ for $(A, C)$ to be a $\rho$-additive partition. In this case, it simply connects the root to any $\rho$ of the trees $T_i$. By choosing $\rho = \sqrt{k}$, this ensures enough terminals are covered. See \Cref{alg:directed} for a precise description of the Procedure \texttt{Directed}. 

\begin{algorithm}[h]	
    \caption{\texttt{Directed}}
    \label{alg:directed}
    
    \Input{Graph $G(V, E)$ with terminals $S$, and a number $k$.}
    \Output{A Steiner $k$-tree of $G$.}
    \BlankLine

    Set $\rho = \sqrt{k}$.\;
    \tcc{Greedy Packing}
    Let $A = \{r\}$, and $C = V - \{r\}$.\;
    \While{$C$ is not a $\rho$-packing}{
        Find a $\rho$-good tree $T$ in $G[C]$.\;
        Remove the vertices of $T$ from $C$ and add them to $A$.\;
    }
    Let $\{T_i\}$ denote the set of $\rho$-good trees found.\;

    \tcc{Many Trees}
    \If{the number of $\rho$-good trees found is at least $\rho$,}{
        Choose any $\rho$ of the trees $\{T_i\}$ in $A$, and form the subgraph $H \subseteq G$ by including the root $r$, the chosen trees, and a shortest path from $r$ to the root $q_i$ of each chosen tree $T_i$.\;
        \Return a shortest path tree of $H$ rooted at $r$.\;
    }
    \tcc{Few Trees}
    \Else{
        The number of $\rho$-good trees found is at most $\rho$, so $(A, C)$ is a $\rho$-additive partition. Apply the Procedure \texttt{Complete} on $(A, C)$, and \Return the resulting tree.\;
    }
\end{algorithm}

In the case that a $\rho$-additive partition $(A, C)$ is found, we use the Procedure \texttt{Complete}, described in \Cref{alg:complete}. See \Cref{fig:directed} for a depiction of the algorithm at this step. 

\begin{algorithm}[h!]	
    \caption{\texttt{Complete}}
    \label{alg:complete}
    
    \Input{Graph $G(V, E)$ with terminals $S$, a $\rho$-additive partition $(A, C)$, and a number $k$. }
    \Output{A Steiner $k$-tree of $G$.}
    \BlankLine

    Apply the procedure \texttt{PMCover} with partition $(A, C)$ to get $\mcb{E}'$.\;
    Let $\mcb E = \{ac: (a, c) \in \mcb E'\}$, the set of arcs corresponding to sets in $\mcb{E}'$\;
    Form the graph $H_C$ on vertex set $C \cup \{r'\}$, where $r'$ is a new node. For each $c \in C$ appearing in some $ac \in \mcb E$, include in $H_C$ the arc $(r', c)$ and the coverage tree $T(c)$. Take a shortest path tree on $H_C$ rooted at $r'$, and let $T_C$ be all of the edges from this tree in $G[C]$. \label{succ:line3}\;
    Form the subgraph $H \subseteq G$ by including the root $r$, each $\rho$-good tree $T_i$ from $A$ and a shortest path from $r$ to its root $q_i$, the edges from $\mcb E$, and the edges from $T_C$.\label{succ:line4}\;
    \Return a shortest path tree of $H$ rooted at $r$. \;
\end{algorithm}

\begin{figure}[h]
    \centering
    \includegraphics[scale=0.65]{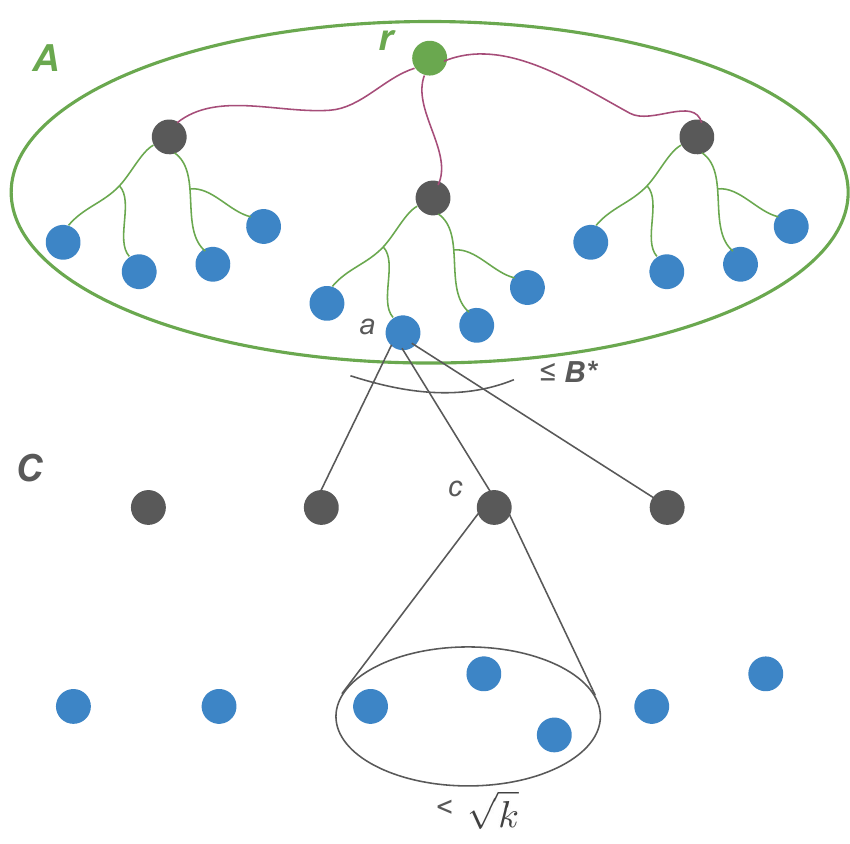}
    \caption{A depiction of the algorithm in the case that a $\rho$-additive partition is found. The set $A$ includes the root $r$ and all $\sqrt{k}$-good trees found, while $C$ contains the remaining vertices. Terminals are depicted in blue. Short paths from $r$ to the roots of the good trees are added (in red). Since $C$ is a $\sqrt{k}$-packing, each vertex $c \in C$ can reach less than $\sqrt{k}$ terminals within distance $D^*$. Hence, we can run the \texttt{PMCover} procedure, with each iteration enforcing a degree constraint of $B^*$ on each node in $A$, as shown.}
    \label{fig:directed}
\end{figure}

\subsubsection*{Analysis}
For a directed tree, $T$, let $\deg_{T}(v)$ be the (out-)degree of the vertex in $T$. Now say that we run step \emph{Greedy Packing} of \texttt{Directed} with $\rho=\sqrt{k}$.

\begin{claim}
\label{c21}
If Procedure \texttt{Directed} finds at least $\rho$ $\rho$-good trees,
then step \textit{Many Trees} of Procedure \texttt{Directed}
returns a tree with at least $k$ terminals, maximum degree $O(\sqrt{k})$,
and height $O(D^*)$
\end{claim}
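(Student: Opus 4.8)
The plan is to analyze directly the subgraph $H$ constructed in step \emph{Many Trees} and then transfer all three bounds to the shortest path tree of $H$ that is actually returned. Write $\rho=\sqrt k$, let $T_1,\dots,T_\rho$ be the chosen $\rho$-good trees, let $q_i$ be the root of $T_i$, and let $P_i$ be the shortest $r$-to-$q_i$ path used to build $H$. After discarding non-terminal leaves we may assume each $T_i$ has exactly $\rho$ leaves, all of which are terminals.

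For the height and terminal count I would argue as follows. By \Cref{ass:dist} every vertex is within distance $D^*$ of $r$, so each $P_i$ has at most $D^*$ edges, and each $T_i$ has height at most $D^*$ by definition of $\rho$-good; hence every vertex of $H$ lies within distance $2D^*$ of $r$ in $H$ (follow $P_i$ to $q_i$, then descend in $T_i$), and therefore the returned shortest path tree has height at most $2D^*=O(D^*)$. For the terminal count, the $T_i$ are vertex-disjoint and each contributes $\rho$ distinct leaf terminals, for a total of $\rho^2=k$ terminals; each such terminal is reachable from $r$ in $H$, so it appears in every shortest path tree of $H$ rooted at $r$. (Terminals have out-degree $0$, so they are leaves and never appear as interior vertices of the $P_i$, which is why nothing is lost in these two arguments.)

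The degree bound is the step that needs a little care. I would first show $H$ has maximum out-degree at most $2\rho$: a vertex $v$ belongs to at most one of the vertex-disjoint trees $T_i$, contributing out-degree at most $\rho$ there (a tree whose leaves are all terminals has exactly $\rho$ leaves, and a vertex with out-degree $d$ has $d$ disjoint child-subtrees each containing a leaf, so $d\le\rho$); and $v$ can be an interior vertex of at most $\rho$ of the paths $P_i$, since there are only $\rho$ of them, each giving $v$ one out-edge. Summing yields out-degree at most $2\rho=O(\sqrt k)$ in $H$. The returned tree is a shortest path tree of $H$ and hence a subgraph of $H$, so the out-degree of each vertex can only decrease; thus the output has maximum degree $O(\sqrt k)$. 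The only conceptual point, rather than a genuine obstacle, is recognizing that passing to a shortest path tree of $H$ is exactly what resolves the possible non-disjointness between the $P_i$ and the $T_i$ (so that the output is genuinely a tree) while leaving the degree and height bounds intact.
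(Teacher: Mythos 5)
Your proposal is correct and follows essentially the same argument as the paper: bound the degree and height of the auxiliary subgraph $H$ (each disjoint $\rho$-good tree contributes degree at most $\sqrt{k}$ and height $D^*$, the $\sqrt{k}$ shortest paths contribute degree at most $1$ each and length at most $D^*$ by \Cref{ass:dist}), count the $k$ terminals from the disjoint trees, and pass to the shortest path tree of $H$ to resolve non-disjointness. Your explicit justification that a tree with $\rho$ terminal leaves has maximum degree at most $\rho$ is a detail the paper states without proof, but it is the same approach.
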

\begin{proof}
Since each tree $T_i$ is $\sqrt{k}$-good, it contains $\sqrt{k}$ terminals. Hence, the graph $H$ contains at least $k$ terminals, each of which can be reached by a path from the root. So the returned shortest path tree of $H$ has at least $k$ terminals, as desired.

To bound the degrees in the returned tree, we just bound the degrees in $H$. The good trees $T_i$ are disjoint, and each have maximum degree at most $\sqrt{k}$. Moreover, there are $\sqrt{k}$ of them, so there are only $\sqrt{k}$ shortest paths to their roots. Therefore, the degree contributed to any node $v \in H$ is at most $\sqrt{k}$ from the $T_i$, and at most $1$ for each shortest path, for a total of $\deg_H(v) \leq 2\sqrt{k}$. Finally, each tree $T_i$ in $H$ has height at most $D^*$, while each shortest path from the root to some $q_i$ has length at most $D^*$ (by \Cref{ass:dist}), so the returned shortest path tree has height at most $2\cdot D^*$. 
\end{proof}

\begin{claim}
\label{claim:success}
If Procedure \texttt{Directed} finds less than $\rho$ $\rho$-good trees,
then Procedure \texttt{Complete}
finds a tree rooted at $r$ 
with maximum degree $O(\log k)\cdot B^*+O(\sqrt{k})$, and height $O(D^*)$ that contains at least $k$ terminals of $C\cap S$.
\end{claim}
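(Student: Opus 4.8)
I would analyze Procedure \texttt{Complete} run on the $\rho$-additive partition $(A,C)$ with $\rho=\sqrt k$ produced in the ``Few Trees'' branch of \texttt{Directed}, and verify the three claimed properties for its output, namely the shortest path tree of the graph $H$ assembled in lines~\ref{succ:line3}--\ref{succ:line4} (which we may additionally prune so that it spans exactly $k$ terminals, only decreasing degrees and height). Recall that here $k$ denotes the number of terminals still to be covered, that by property~2 of a $\rho$-additive partition there are at most $\sqrt k$ good trees $T_i$, and that by property~3 the set $C$ is a $\sqrt k$-packing.

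\textbf{Coverage and height.} By \Cref{c5} (whose feasibility hypothesis is furnished by \Cref{c11}), \texttt{PMCover} returns $\mc E'$ covering at least $k$ terminals of $C\cap S$, with $|X(a)\cap\mc E'| = O(\log k)\cdot B^*$ for every $a\in A$. For each covered terminal $t$ there is a pair $(a,c)\in\mc E'$ with $\dist_{G[C]}(c,t)\le D^*$, so $t$ lies in $T(c)$, hence is reachable from $r'$ in $H_C$ through the arc $(r',c)$, and therefore lies in $T_C$ at depth at most $D^*$ below the relevant node that roots its tree in the forest $T_C$. Every such relevant node is in turn reachable from $r$ in $H$ by following the shortest path $P_i$ to the root $q_i$ of the good tree $T_i\ni a$, descending $T_i$ to $a$, and taking the $\mc E$-arc $ac$. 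Composing these, $H$ contains an $r$--$t$ path for every covered terminal $t$, so the (pruned) output spans at least $k$ terminals of $C\cap S$; moreover, since $|P_i|\le D^*$ by \Cref{ass:dist}, each $T_i$ has height at most $D^*$, and $T_C$ has depth at most $D^*$ below each of its roots, every vertex of $H$ is within $O(D^*)$ of $r$, so the output has height $O(D^*)$.

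\textbf{Degree.} I would bound the out-degree of a vertex $v$ by summing contributions from the four kinds of arcs in $H$. The good trees $T_i$ are vertex-disjoint, each has at most $\sqrt k$ leaves (its $\sqrt k$ terminals) and hence maximum degree at most $\sqrt k$, so they add at most $\sqrt k$; the at most $\sqrt k$ paths $P_i$ add at most one each, hence at most $\sqrt k$; the arcs of $\mc E$ leave only from $A$ and add $|X(a)\cap\mc E'| = O(\log k)\cdot B^*$ to each $a\in A$ and nothing to vertices of $C$; finally the arcs of $T_C$ leave only from $C$, and for $v\in C$ the children of $v$ along $T_C$ in the pruned tree lead to vertex-disjoint subtrees, each containing a distinct terminal of $C\cap S$ within distance $D^*$ of $v$ in $G[C]$ --- but $C$ is a $\sqrt k$-packing, so $v$ is not $\sqrt k$-good and there are fewer than $\sqrt k$ such terminals, hence fewer than $\sqrt k$ such children. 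Adding the four bounds gives out-degree $O(\log k)\cdot B^* + O(\sqrt k)$ at every vertex.

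\textbf{Main obstacle.} The delicate point is the $T_C$ contribution to the degree of a node $v\in C$: a vertex of $C$ can lie in many coverage trees $T(c)$ at once, so its out-degree inside $H_C$, and hence inside the raw shortest path tree $T_C$, may be large. The fix is to discard branches that reach no required terminal and to verify carefully that every retained terminal below $v$ lies within distance \emph{exactly} $D^*$ --- not merely $O(D^*)$ --- of $v$ inside $G[C]$, so that the $\sqrt k$-packing property can be applied; this uses that each $T(c)$ has depth at most $D^*$ and that the descending path in $T_C$ from $v$ to such a terminal stays within $G[C]$ (in particular, does not detour through $A$ via a $P_i$-arc and back through an $\mc E$-arc). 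Everything else is bookkeeping with \Cref{c5}, \Cref{c11}, and \Cref{ass:dist}.
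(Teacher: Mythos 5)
Your proposal is correct and follows essentially the same route as the paper's proof: bound coverage and height via \Cref{c5}, \Cref{c11}, and \Cref{ass:dist}, and bound degrees by summing the contributions of the disjoint good trees $T_i$, the at most $\sqrt{k}$ shortest paths, the $\mc{E}$-arcs (only from $A$, $O(\log k)\cdot B^*$ each), and $T_C$ (only from $C$, at most $\sqrt{k}$ by the $\sqrt{k}$-packing property). The ``delicate point'' you flag about $T_C$-degrees of vertices in $C$ is exactly the step the paper handles by counting terminal leaves in the subtree of $T_C$ below such a vertex, and your resolution (depth at most $D^*$ within $G[C]$ via the shortest-path-tree structure of $H_C$) is the same argument made slightly more explicit.
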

\begin{proof}
First, observe that $H$ contains all terminals in $A$, as well as those terminals in $C$ covered by procedure \texttt{PMCover}. In particular, by \Cref{c5}, $H$ contains at least $k$ terminals, so the returned shortest path tree does as well.

Now we bound the degrees of nodes in the returned tree. The $T_i$ making up $A$ are disjoint $\sqrt{k}$-good trees each having maximum degree at most $\sqrt{k}$. And there are less than $\sqrt{k}$ of them, so we add at most $\sqrt{k}$ shortest paths to their roots $q_i$. Hence, for each node $v \in A$, the contribution to the degree $\deg_H(v)$ is at most $\sqrt{k}$ from the $T_i$, at most 1 for each shortest path, plus the contribution from $\mcb{E}$. By \Cref{c5}, the edges of $\mcb{E}$ increase the degree of vertices in $A$ by $O(\log k)\cdot B^*$, so in total $\deg_H(v) \leq O(\log k) \cdot B^* + 2\sqrt{k}$ for each $v \in A$. 

All other vertices in $H$ lie in $C$, and so their degree comes only from the $\sqrt{k}$ shortest paths (contributing at most 1 each), and the edges from $T_C$. Every coverage tree $T(c)$ has depth at most $D^*$ by definition, and hence any proper subtree of $T_c$ must also have depth at most $D^*$. In particular, for any vertex $c \in C$, we must have $\deg_{T_C}(c) \leq \sqrt{k}$, since otherwise the subtree of $T_C$ rooted at $c$ has more than $\sqrt{k}$ leaves, which can all be assumed to be terminals. But this means that $c$ has more than $\rho=\sqrt{k}$ terminals in $C$ of distance at most $D^*$, contradicting that $c$ is not $\rho$-good. Hence, $\deg_{H}(c) \leq 2\sqrt{k}$ for every $c \in C$.

Finally, the height of the output tree is at most $3\cdot D^*+1$, because we get height $D^*$, from the trees $T_i$, height $D^*$ from the shortest paths, height $D^*$ from $T_C$, and an additional edge from $\mcb{E}$. 
\end{proof}

Therefore, in either case we return a tree with at least $k$ terminals 
with maximum degree 
$O(\log k)\cdot B^*+O(\sqrt{k})$
and height $O(D^*)$. This implies 
Theorem \ref{t1}. 

The following corollary is useful as it applies 
in case that the \emph{Greedy Packing} step of Procedure \texttt{Directed} finds a $\rho$-additive partition (i.e., step \textit{Few Trees} is executed)
with some $\rho$ that may be smaller than 
$\sqrt{k}$.

\begin{corollary}
\label{cor1}
If Procedure \texttt{Directed} finds a $\rho$-additive partition 
$A,C$, then there exists a polynomial time $\rho$-additive 
approximation for the corresponding min poise $k$-tree problem.
\end{corollary}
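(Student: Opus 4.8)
The plan is to observe that Procedure \texttt{Complete} already does all the work: given the $\rho$-additive partition $(A,C)$ produced by the \emph{Greedy Packing} step, we run \texttt{Complete} on $(A,C)$ and output the resulting shortest path tree. The key point is that the analysis of \Cref{claim:success} never used $\rho=\sqrt{k}$ specifically; it used only the three defining properties of a $\rho$-additive partition — the $T_i$ are vertex-disjoint $\rho$-good trees, there are at most $\rho$ of them, and $C$ is a $\rho$-packing — together with the \texttt{PMCover} guarantee of \Cref{c5}. So that argument transfers essentially verbatim, with every occurrence of $\sqrt{k}$ replaced by $\rho$.

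Concretely, to bound the degree of a node $v$ in the output tree it suffices to bound its degree in the graph $H$ that \texttt{Complete} constructs. If $v\in A$: the vertex-disjoint $\rho$-good trees $T_i$ have maximum degree at most $\rho$ and hence contribute at most $\rho$ to $\deg_H(v)$; the at most $\rho$ shortest paths from $r$ to the roots $q_i$ contribute at most $1$ each; and by \Cref{c5} the arcs of $\mc E$ raise the degree of any $a\in A$ by at most $O(\log k)\cdot B^*$. If $v=c\in C$: it receives at most $\rho$ from the shortest root-paths, and $\deg_{T_C}(c)\le\rho$, since otherwise the subtree of $T_C$ rooted at $c$ has more than $\rho$ leaves, which may be assumed to be terminals, contradicting that $c$ is not $\rho$-good (as $C$ is a $\rho$-packing). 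Hence $\deg_H(v)\le O(\log k)\cdot B^*+O(\rho)$ for every $v$, and passing to a shortest path tree only decreases degrees. For the height, every vertex of $H$ is reachable from $r$ either along a shortest path to some $q_i$ (length at most $D^*$) followed by a path inside a tree $T_i$ (length at most $D^*$), or along such a path into $A$, then one arc of $\mc E$, then a path inside a coverage tree contributing to $T_C$ (length at most $D^*$); so the shortest path tree of $H$ has height $O(D^*)$.

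For coverage, the partition $(A,C)$ comes with all of $A$'s terminals already spanned by a low-poise tree (the $T_i$ together with the shortest root-paths), so after resetting the target to the residual number $k$ of terminals still needed in $C$ we have $|T^*\cap C\cap S|\ge k$; thus \Cref{c11} certifies a feasible \texttt{Matroid} instance and \Cref{c5} guarantees that \texttt{PMCover} covers at least $k$ terminals of $C$. Together with $A$'s terminals, $H$ — and hence the output tree — contains at least $k$ terminals. Combining the three estimates, the output is a tree with at least $k$ terminals, maximum degree $\tilde O(B^*)+O(\rho)$, and height $O(D^*)$, which is precisely a $\rho$-additive approximation in the sense of the definition.

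I do not anticipate a genuine obstacle: the statement is a parametrized restatement of \Cref{claim:success}, and the only thing to verify is that no step of that proof secretly depended on $\rho=\sqrt{k}$ — which, as noted, it does not. The one point that needs care is a bookkeeping one, namely the meaning of "$k$" in the corollary versus inside \texttt{PMCover}: following the footnote attached to the \texttt{PMCover} section, one reinterprets $k$ as the number of terminals still to be covered in $C$ once $A$ has been spanned, and it is exactly this reinterpretation that makes the matroid instance feasible via \Cref{c11}.
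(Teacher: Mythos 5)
Your proposal is correct and matches the paper's intent exactly: the paper states \Cref{cor1} without a separate proof precisely because it is the parametrized version of \Cref{claim:success}, obtained by running \texttt{Complete} on the $\rho$-additive partition and replacing $\sqrt{k}$ by $\rho$ throughout that argument, just as you do (including the residual-$k$ bookkeeping for \texttt{PMCover}).
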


\section{The undirected case}
In this section, we provide our $\tilde O(t^{1/3})$-approximation algorithm for the Minimum Time Telephone $k$-multicast problem on undirected graphs with $t$ terminals, proving Theorem~\ref{t2}.

\paragraph{Preliminaries}
Recall that we assume that all terminals 
have degree $1$. 
Hence, after rooting $T^*$ at $r$, the set of (non-root) leaves 
in $T^*$ and the set of terminals in $T^*$ is the same set.
Also recall that the height of the tree $T^*$ rooted at  $r$ is $D^*$.

\paragraph{Algorithm outline} The idea in the undirected case is that if a low-poise tree covering many terminals is found, then we need only cover \emph{any} node in that tree in order to cover all of those terminals with low poise (as opposed to the directed case where we would have to cover the \emph{root} of that tree). Essentially, we may contract the tree and treat the contracted node as containing many terminals. 

Specifically, we will maintain a set $R$ of nodes we have covered with low poise via edges $E(R)$ (by contracting $E(R)$, we can think of this simply as the root $r$). We first partition the remaining graph $C = V \setminus R$ as before by greedily finding small trees. 

\begin{definition}
We say that a tree is {\em small} if it contains 
exactly  $t^{1/3}$ terminals. 
We say that a tree is {\em large}  
if it contains exactly  $t^{2/3}$ 
terminals.
\end{definition}

If this procedure succeeds in finding a $t^{1/3}$-additive partition, then we are done by \Cref{cor1}. On the other hand, if we fail, we contract these small trees and show how to cover a sufficiently large number of them by either finding a single large tree reaching $t^{1/3}$ of these small trees, or by applying the procedure \texttt{PMCover}. In either case, we may then remove \emph{all} of the terminals from these small trees, contract the newly covered nodes into $R$, and iterate the entire process to cover the remaining terminals. In each iteration, we show the total number of terminals discarded is large, so there cannot be too many iterations, and hence not too much additional degree is incurred in $E(R)$. 

We first give a simple algorithm, Procedure \texttt{Small}, that finds trees $\{T_i\}$ each with exactly $t^{1/3}$ terminals.

\begin{algorithm}[h!]	
    \caption{\texttt{Small}}
    \label{alg:small}
    
    \Input{Graph $G(V, E)$ with $t$ terminals $S$, and a number $k$. }
    \Output{A collection of subtrees $\{T_i\}$, each with exactly $t^{1/3}$ terminals, or a Steiner $k$-tree.}
    \BlankLine

    Apply step \textit{Greedy Packing} from Procedure \texttt{Directed} on $G$ with $\rho=t^{1/3}$. Denote the resulting trees as $\{T_i\}$.\;

    If the procedure succeeds in finding a $t^{1/3}$-additive partition, apply Procedure \texttt{Complete} on $A,C$, and \Return the resulting tree.\;

    Else, \Return $\{T_i\}$\;
\end{algorithm}

In case 
that step \textit{Greedy Packing} from Procedure \texttt{Directed} finds a $t^{1/3}$-additive partition 
$A,C$, we are guaranteed a $t^{1/3}$-additive ratio 
from \Cref{cor1}.
Hence, from now on we 
assume that step \textit{Greedy Packing} from Procedure \texttt{Directed} gives more than  $t^{1/3}$ small 
trees 
$T_i$. 

We will proceed to contract each of these small trees into super-terminals. 
The trees $T_i$ that we compute are built by 
step \textit{Greedy Packing} from Procedure \texttt{Directed} with $\rho=t^{1/3}$. Hence, they have exactly
$t^{1/3}$ terminals. For each $T_i$, we contract its edges to create a single super-terminal 
$q_i$. Denote by $S(T_i)$ the terminals contained in $T_i$ (i.e., those corresponding to $q_i$). 
As mentioned in the outline, we have the possibility that the terminals of an optimal tree may only intersect with a few of these super-terminals. 
We capture this  in the following definitions.
%Under this assumption, we use the following notation throughout.
\begin{definition}
We say that $q_i$ is a {\em true terminal}
if $S(T_i)\cap T^*\neq \emptyset$.
\end{definition}
\begin{definition}
Denote by $k'$ the number of terminals 
in $(\bigcup_i T_i)\cap T^*$.
Let 
$\mu=\lceil k'/t^{1/3} \rceil$. 
\end{definition}
From the definitions, we can see that $T^*$ overlaps with at least $\mu$ true terminals.
%(i.e., $T^*$ shares terminals with at least $\mu$ of the $T_i$).

In the graph where the small trees have been contracted to super-terminals, we will attempt to find a $t^{1/3}$-packing of these super-terminals. 
For this, we generalize the definition of a $t^{1/3}$-packing in the  
set $C$ with respect to the super-terminals. 
\begin{definition}
We say that $c\in C$ is a $t^{1/3}$-good vertex 
with respect to the super-terminals 
$\{q_i\}$
if there are at least 
$t^{1/3}$ terminals  $q_i$ of distance at most $D^*$ 
from $c$, in $G[C]$. If there are no $t^{1/3}$-good vertices 
in $C$, $C$ is called a $t^{1/3}$-packing with respect to
$\{q_i\}$.
If $C$ is a $t^{1/3}$-packing, then $R,C$ is called 
a $t^{1/3}$-additive partition with respect to $\{q_i\}$. 
\end{definition}

We can now describe the details of the rest of the undirected algorithm. Specifically, if Procedure \texttt{Small} fails to find a $t^{1/3}$-additive partition, then there are two possibilities. Either $C = V \setminus R$ is a $t^{1/3}$-packing with respect to the $q_i$, or otherwise there is a $t^{1/3}$-good vertex in $C$. 

If 
$C$ is a $t^{1/3}$-packing  
we apply Procedure 
\texttt{PMCover} on $R,C$ with terminals 
$\{q_i\}$ since $R,C$ is a $t^{1/3}$-additive partition.
The goal is 
covering $\mu$ super-terminals. We know that $T^*$ covers at least 
$\mu$ true terminals $q_i$, so these can be reached with 
height $D^*$ 
and maximum degree $B^*$.  
Therefore, our Procedure 
\texttt{PMCover} covers at least $\mu$ super-terminals. Note that the number of original terminals we actually cover is  
$\mu\cdot t^{1/3} \geq k'$. This follows because each $q_i$ represents a tree
$T_i$ that contains $t^{1/3}$ terminals. We now add to $E(R)$ the edges used to cover those super-terminals along with the edges from the corresponding trees $T_i$ of the super-terminals covered. Then we discard 
all the terminals 
of $\bigcup_i T_i$. Since the number of 
$T_i$ is at least $t^{1/3}$, the total number of discarded 
terminals is $t^{2/3}$.

The other case is that $C$ is not a $t^{1/3}$-packing 
with respect to $\{q_i\}$. Let $v\in C$ be a 
$t^{1/3}$-good  vertex and let $Q_v$ be the corresponding tree.
Note that $Q_v$ is a large tree since it spans 
$t^{1/3}$ of the $q_i$, each representing $t^{1/3}$ terminals. We connect 
$R$ to $Q_v$ via a shortest path $P$ from $R$ to $Q_v$, add the edges of $P$ and those from $Q_v$ (including the edges from the contracted trees $T_i$) to $E(R)$, and contract 
$R\cup P\cup Q_v$ into $R$. Then we discard the terminals of 
$Q_v$.
Since $Q_v$ is a large tree, the number of terminals discarded is 
$t^{2/3}$.  

In summary, in both cases $t^{2/3}$ terminals 
are discarded. Therefore the number of iterations in our 
algorithm is at most $t^{1/3}$.

The degree of vertices in $R$ induced by $E(R)$ increases  
by $O(\log k)\cdot B^*$ every time 
\texttt{PMCover} is applied.
Alternatively, a large tree $Q_v$ is created 
and we only need a path $P$ from $r$ to $Q_v$. 
This increases the degree of some vertices in $R$ by exactly $2$.
This gives a total degree of $2\cdot t^{1/3}$ because of the bound on the number of iterations.

\subsubsection*{The main procedure}
Here we describe the precise algorithm for the undirected problem, Procedure \texttt{Undirected} in \Cref{alg:undirected}.

\begin{algorithm}[h!]	
    \caption{\texttt{Undirected}}
    \label{alg:undirected}
    
    \Input{Graph $G(V, E)$ with $t$ terminals $S$, and a number $k$.}
    \Output{A Steiner $k$-tree}
    \BlankLine
    $R\gets \{r\}$, $E(R) \gets \varnothing$, $S'\gets S$.\;
    \While{$k > 0$}{
        Apply Procedure \texttt{Small} with $\rho=t^{1/3}$ on $C=V\setminus R$. If it succeeds, return the resulting tree.\;

        If \texttt{Small} fails, contract the terminals from each $T_i$ in the resulting packing into a corresponding super-terminal $q_i$.\;

        \If{$C=V\setminus R$ is not a $t^{1/3}$-packing with respect to $\{q_i\}$}{
            Find a large tree $Q_v$ inside $G[C]$.\;
            Compute a shortest path $P$ from $r$ to $Q_v$.\;
            $R \gets R \cup P \cup Q_v$.\;
            $E(R) \gets E(R) \cup E(P) \cup E(Q_v)$.\;
            Remove from $S'$ all the terminals of $Q_v$ and update $k$.\;
        }
        \Else(\tcp*[h]{$C = V \setminus R$ is a $t^{1/3}$-packing.}){
            Apply Procedure \texttt{PMCover} with $A=R$ and $C=V-R$ with the goal of covering super-terminals. \;
            Let $\mcb{E}$ be the edges corresponding to the returned sets $(a, c) \in \mcb{E}'$, and $T_C$ the shortest path tree on the corresponding trees $T(c)$ (as in line \ref{succ:line3} of \texttt{Complete}). Write $Q = \mcb{E} \cup T_C$.\;
            $R \gets R \cup Q$.\;
            $E(R) \gets E(R) \cup E(Q)$.\;
            Remove from $S'$ all the terminals $\bigcup_i T_i$ and update $k$.\;
        }
    }
    \Return the tree induced by $E(R)$.\;
\end{algorithm}

\subsubsection*{Analysis}

\begin{claim}
\label{claim:true-terms}
$T^*$ contains at least 
$\mu$ true terminals.
\end{claim}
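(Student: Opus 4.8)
The statement to prove is Claim~\ref{claim:true-terms}: that $T^*$ contains at least $\mu$ true terminals, where $\mu = \lceil k'/t^{1/3}\rceil$ and $k'$ is the number of terminals in $(\bigcup_i T_i) \cap T^*$. The plan is a straightforward counting argument using the structure established by the greedy packing step. First I would recall that each small tree $T_i$ contains exactly $t^{1/3}$ terminals, that distinct $T_i$ are vertex-disjoint (hence their terminal sets $S(T_i)$ are pairwise disjoint), and that $q_i$ is called a \emph{true terminal} precisely when $S(T_i) \cap T^* \neq \emptyset$.

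The key step is to observe that every terminal counted by $k'$ lies in some $S(T_i)$, and moreover in exactly one, by disjointness. So if we let $\mathcal{J}$ be the index set of true terminals, then $(\bigcup_i T_i) \cap T^* \cap S = \bigcup_{i \in \mathcal{J}} (S(T_i) \cap T^*)$, and the right-hand side is a disjoint union of sets each of size at most $|S(T_i)| = t^{1/3}$. Therefore $k' = \sum_{i \in \mathcal{J}} |S(T_i) \cap T^*| \leq |\mathcal{J}| \cdot t^{1/3}$, which rearranges to $|\mathcal{J}| \geq k'/t^{1/3}$. Since $|\mathcal{J}|$ is an integer, $|\mathcal{J}| \geq \lceil k'/t^{1/3} \rceil = \mu$, as desired.

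I do not expect a genuine obstacle here — this is essentially a pigeonhole bound — but the one point to be careful about is that the definition of $k'$ counts terminals \emph{in $T^*$} that also lie in the union of the small trees; one must make sure the terminals outside $\bigcup_i T_i$ (terminals of $T^*$ not captured in any $T_i$, as well as $T^*$-terminals that were already discarded in prior iterations) are simply not counted in $k'$, so they do not interfere with the bound. Also worth stating explicitly is that a super-terminal $q_i$ contributing to $k'$ (i.e., with $S(T_i)\cap T^* \neq \emptyset$) is by definition a true terminal, so the index set over which $k'$ is summed is contained in $\mathcal{J}$; the reverse containment is not needed. With these observations the claim follows immediately from the displayed inequality.
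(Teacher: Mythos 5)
Your proof is correct and is essentially the same counting argument as the paper's: the paper phrases it as a contrapositive (if fewer than $\mu$ true terminals existed, then $(\bigcup_i T_i)\cap T^*$ would have at most $(\mu-1)\cdot t^{1/3} < k'$ terminals, a contradiction), while you argue directly that $k' \leq |\mathcal{J}|\cdot t^{1/3}$ and round up. The extra care you take about disjointness of the $S(T_i)$ and terminals outside $\bigcup_i T_i$ is fine but not a substantive difference.
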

\begin{proof}
If the number of true terminals is at most $\mu-1$,
the number of terminals in  $\bigcup_i T_i$ is at most $(\mu-1)\cdot t^{1/3}<k'$ 
and this is a contradiction.
\end{proof}

\begin{claim}
\label{number}
The number of iterations 
in 
Procedure \texttt{Undirected} is at most
$t^{1/3}$.
\end{claim}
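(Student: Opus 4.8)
The plan is a straightforward disjoint-charging argument: every iteration of the main \texttt{while} loop of Procedure \texttt{Undirected} that does not immediately terminate permanently removes at least $t^{2/3}$ terminals from $S'$, and since $|S| = t$ these removals can happen at most $t/t^{2/3} = t^{1/3}$ times.

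First I would pin down exactly when Procedure \texttt{Small} fails. Step \emph{Greedy Packing} always halts with $C$ a $t^{1/3}$-packing (that is its loop condition), and the trees $T_i$ it produces are vertex-disjoint and $t^{1/3}$-good by construction; so the only way \texttt{Small} can fail to certify a $t^{1/3}$-additive partition is that it has produced strictly more than $t^{1/3}$ small trees $T_i$. Record this fact: whenever \texttt{Small} fails, the collection $\{T_i\}$ has more than $t^{1/3}$ members, each containing exactly $t^{1/3}$ pairwise-disjoint terminals (recall terminals are leaves and the trees are vertex-disjoint).

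Next I would verify, case by case, that an iteration in which \texttt{Small} fails discards at least $t^{2/3}$ terminals. If $C = V\setminus R$ is \emph{not} a $t^{1/3}$-packing with respect to the super-terminals $\{q_i\}$, the algorithm finds a large tree $Q_v$; $Q_v$ spans at least $t^{1/3}$ of the $q_i$, and each such $q_i$ is the contraction of a distinct small tree holding $t^{1/3}$ original terminals, so $Q_v$ carries at least $t^{2/3}$ terminals, all of which are then removed from $S'$. If instead $C$ is a $t^{1/3}$-packing, the algorithm removes \emph{all} terminals of $\bigcup_i T_i$; by the previous paragraph there are more than $t^{1/3}$ trees $T_i$, each with $t^{1/3}$ disjoint terminals, hence again at least $t^{2/3}$ removed terminals. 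The crucial point to get right here is that in the packing branch we only \emph{cover} about $\mu\, t^{1/3} \ge k'$ terminals via \texttt{PMCover}, yet we \emph{discard} the full $\ge t^{2/3}$, and it is the discard count (not the cover count) that drives the iteration bound.

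Finally, discarded terminals never re-enter $S'$, so the sets of terminals removed in distinct iterations are pairwise disjoint; since $|S| = t$, there can be at most $t^{1/3}$ iterations in which \texttt{Small} fails. The only remaining possibility, that \texttt{Small} succeeds in some iteration, causes the procedure to return immediately, so it occurs in at most one (necessarily final) iteration; in total we get $O(t^{1/3})$ iterations, as claimed. I do not anticipate a real obstacle: the only points needing care are the characterization of failure of \texttt{Small} via the precise definition of a $t^{1/3}$-additive partition, and keeping the bookkeeping straight between ``covered'' and ``discarded'' terminals in the \texttt{PMCover} branch.
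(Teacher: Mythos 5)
Your proposal is correct and follows essentially the same argument as the paper: in both branches of a non-terminating iteration at least $t^{2/3}$ terminals are discarded (from the large tree $Q_v$, or from the more than $t^{1/3}$ small trees each with $t^{1/3}$ terminals), and since the total number of terminals is $t$ this bounds the number of iterations by $t^{1/3}$. Your extra bookkeeping about the one possible final iteration in which \texttt{Small} succeeds is a harmless refinement the paper leaves implicit.
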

\begin{proof}
If a tree $Q_v$ is found, then it is a large tree 
hence it contains at least $t^{2/3}$ terminals. 
These terminals are discarded in the iteration.
Else, the terminals of $S\cap \bigcup_i T_i$ 
are discarded and this,  again, this removes 
$t^{2/3}$  terminals, since we have at least $t^{1/3}$ different small $T_i$'s (because procedure \texttt{Small} failed). 
Since in either case $t^{2/3}$ 
terminals are discarded and the total number 
of terminals is $t$, the number of iterations is
at most $t^{1/3}$.
\end{proof}

\begin{claim}
\label{before}
Let $v$ be a vertex so that $v\not\in R$.
A single iteration of 
Procedure \texttt{Undirected}
increases $v$'s degree in $E(R)$ by at most $2 \cdot t^{1/3} + 2$. Moreover, if $v$'s degree increases, $v$ is contracted 
into $r$ in that iteration.
\end{claim}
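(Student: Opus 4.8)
The plan is to analyze a single iteration of Procedure \texttt{Undirected} and track how much the degree of a vertex $v \notin R$ can grow, splitting into the two cases of the \texttt{if/else}. In both cases, I will first observe that the only way $v$'s degree changes is if $v$ becomes part of the new structure added to $R$ — either $v \in P \cup Q_v$ in the "not a packing" case, or $v$ lies in $Q = \mathcal{E} \cup T_C$ in the "packing" case — and in either case $v$ is contracted into $r$ during that iteration. This will establish the "moreover" part.

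For the \emph{first case} (a large tree $Q_v$ is found), the iteration adds the edges of the shortest path $P$ from $r$ to $Q_v$, plus the edges of $Q_v$ itself. A shortest path contributes at most $2$ to the degree of any vertex on it (one incoming, one outgoing edge, and in the undirected setting we count total degree). The tree $Q_v$ is, by \Cref{d3} applied to the super-terminals, formed from shortest paths from its root to $t^{1/3}$ of the super-terminals $q_i$ within distance $D^*$; since $C$ was a $t^{1/3}$-packing immediately before this vertex was found — wait, in this branch it is \emph{not} a packing, so I should instead argue directly from the structure of $Q_v$ as a coverage tree. The key point is that $Q_v$ is a coverage tree $T(v)$ for exactly $t^{1/3}$ super-terminals, and after contracting each small tree's terminals to a point, each such coverage tree has max degree at most $t^{1/3}$ (any vertex with larger degree in $T(v)$ would have a subtree with more than $t^{1/3}$ super-terminal leaves, contradicting that $v$ is a $t^{1/3}$-good vertex chosen minimally — or more simply, we may prune $T(v)$ to be exactly a $t^{1/3}$-good tree, as in the remark following \Cref{d3}). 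Hence $Q_v$ contributes at most $t^{1/3}$ to $\deg(v)$, and together with the path $P$ contributing at most $2$, the total increase is at most $t^{1/3} + 2 \le 2t^{1/3} + 2$.

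For the \emph{second case} ($C$ is a $t^{1/3}$-packing, so \texttt{PMCover} is applied), the new edges are $Q = \mathcal{E} \cup T_C$. By \Cref{c5}, for each $a \in R = A$ the edges $\mathcal{E}$ contribute $O(\log k) \cdot B^*$ to the degree — but that bound is for nodes \emph{in} $A = R$, which are already in $R$; for $v \notin R$, $v \in C$, and $v$'s degree from $\mathcal{E}$ is at most the number of pairs $(a,c) \in \mathcal{E}'$ with $c = v$, i.e. the number of $a \in A$ adjacent to $v$ that selected the pair $(a,v)$ — actually I should bound this more carefully: each $v \in C$ appears as the second coordinate $c$ in at most one selected set per iteration of the inner \texttt{PMCover} loop in the relevant sense, but to be safe I will bound it by the degree $v$ gains from being the root $c$ of its coverage tree $T(c)$. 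The dominant contribution to $\deg(v)$ for $v \in C$ is from $T_C$: as argued in the proof of \Cref{claim:success}, for any $c \in C$ we have $\deg_{T_C}(c) \le t^{1/3}$, since otherwise the subtree of $T_C$ rooted at $c$ has more than $t^{1/3}$ leaves (super-terminals), contradicting that $C$ is a $t^{1/3}$-packing with respect to $\{q_i\}$. Adding the contribution from $\mathcal{E}$ (at most the in-arcs $(r',c)$-style edges, bounded by a further $t^{1/3}$, or a small additive term), the total is again at most $2t^{1/3} + 2$.

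The main obstacle I anticipate is the bookkeeping in the \texttt{PMCover}/\texttt{Complete} case: disentangling which edges are added (the $\mathcal{E}$ edges between $R$ and $C$, versus the coverage-tree edges $T_C$ inside $G[C]$), and confirming that a vertex $v \in C$ accrues at most $\sim t^{1/3}$ degree from each of these pieces — the $\mathcal{E}$-edge bound needs the observation that the ground set items are super-terminals and the relevant "$c$" endpoints have bounded degree in the shortest-path tree on $H_C$, exactly mirroring the argument in \Cref{claim:success}. Once that is set up, everything else is a routine degree count, and the "moreover" clause falls out because in every branch the only vertices whose degree changes are precisely those added to $R$ (and contracted into $r$) in that iteration.
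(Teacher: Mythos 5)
Your proposal follows essentially the same route as the paper: split on the two branches of the iteration, count the degree contributed by each piece added to $R$, reuse the packing argument from \Cref{claim:success} to bound the degree coming from $T_C$, and observe that the only vertices whose degree changes are those contracted into $r$ in that iteration, which gives the ``moreover'' clause.

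One accounting point in the first branch deserves correction. The large tree $Q_v$, as the algorithm uses it (its terminals are discarded \emph{and counted as covered}, and $R \gets R \cup P \cup Q_v$), must be read in the uncontracted graph as including the small trees $T_i$ whose super-terminals it spans: reaching $q_i$ alone does not connect the $t^{1/3}$ terminals of $T_i$. Accordingly, the paper charges $t^{1/3}$ for the (at most one, since they are disjoint) small tree containing the vertex, plus $t^{1/3}$ for the paths from the good vertex to these trees, plus $2$ for the path $P$, which is where the bound $2t^{1/3}+2$ comes from; your count of $t^{1/3}+2$ omits the small-tree contribution. Since the omitted term is at most $t^{1/3}$, your stated final bound still holds, but the intermediate claim understates the increase. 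Also, your first justification that the coverage tree over super-terminals has maximum degree $t^{1/3}$ (``contradicting that $v$ is $t^{1/3}$-good chosen minimally'') does not work: in this branch $C$ is \emph{not} a packing and $v$ may reach far more than $t^{1/3}$ super-terminals; the pruning argument you offer as the alternative (retain exactly $t^{1/3}$ super-terminal leaves, so every vertex has at most $t^{1/3}$ children) is the correct fix and matches the paper's usage. In the second branch your hedged bound on the $\mc{E}$-contribution to a vertex of $C$ is looser than the paper's (which charges $1$ there), but it still fits under $2t^{1/3}+2$, so the conclusion stands.
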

\begin{proof}
The degree of a vertex 
increases only if it belongs to a 
large tree $Q_v$ (or its path $P$ from $r$), or it belongs to the subgraph $Q$ 
computed by Procedure \texttt{PMCover}.
In the first case, the degree increases by at most $t^{1/3}$ from any one of the $T_i$'s in $Q_v$, at most $t^{1/3}$ more for the paths from $v$ to these $T_i$'s, and at most 2 more for the path $P$ from $r$ to $Q_v$, for a total of at most $2\cdot t^{1/3} + 2$. In the second case, the degree increases by at most $t^{1/3}$ from $T_C$ and 1 from $\mcb{E}$, by an identical argument to \Cref{claim:success} (the proof of correctness of Procedure \texttt{Complete}). 

In both cases, 
the vertex is immediately contracted into $r$.
\end{proof}

\begin{claim}
\label{after}
At every iteration, the degree of 
vertices in $R$ is increased by at most 
$O(\log k)\cdot B^*$. 
\end{claim}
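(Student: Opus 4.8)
The plan is to analyze the degree increase in a single iteration that arises specifically from an application of Procedure \texttt{PMCover}, and to show it is bounded by $O(\log k) \cdot B^*$. The key point is that in an iteration where \texttt{Small} fails and $C = V \setminus R$ is a $t^{1/3}$-packing, the procedure \texttt{PMCover} is invoked with $A = R$ and with the super-terminals $\{q_i\}$ serving as the items to be covered. The degree added to a vertex $a \in R$ comes exactly from the edges in $\mc{E} = \{ac : (a,c) \in \mc{E}'\}$, and by \Cref{c5} applied to the partition $(R, C)$ with item set $\{q_i\}$, for every $a \in R$ we have $|X(a) \cap \mc{E}'| = O(\log k) \cdot B^*$. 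Hence $a$ gains degree at most $O(\log k) \cdot B^*$ from $\mc{E}$ in this iteration. (The edges in $T_C$ only touch vertices of $C$, not $R$, and the path $P$ and tree $Q_v$ in the other branch touch vertices that are then contracted into $r$, not pre-existing members of $R$ — that case contributes degree only $2$ as noted in \Cref{before}, which is dominated.)

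The main steps, in order: first, observe that in the branch where $Q_v$ is found, the only vertex of $R$ whose degree can increase is an endpoint of the shortest path $P$ from $r$ to $Q_v$, and that increases its degree by at most $2$, so the claim holds trivially in that branch. Second, in the branch where \texttt{PMCover} is applied, invoke \Cref{c11} to confirm feasibility: $T^*$ covers at least $\mu$ true terminals $q_i$ within height $D^*$ and degree $B^*$ on $R$ (here we use that the restriction of $T^*$ to the graph with small trees contracted still gives an independent set in the partition matroid reaching $\geq \mu$ super-terminals, exactly as in \Cref{c11} with the ground set being $\{q_i\}$ rather than individual terminals). Third, apply \Cref{c5} verbatim to this set-coverage instance: since the \texttt{Matroid} procedure returns independent sets and the coverage requirement $\mu$ is met after $O(\log \mu) = O(\log k)$ rounds, for every $a \in R$ we get $|X(a) \cap \mc{E}'| \leq O(\log k) \cdot B^*$, which is precisely the degree added to $a$ by the edge set $\mc{E}$.

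The only subtlety — and the place I expect to have to be slightly careful — is making sure the hypotheses of \Cref{c5} and \Cref{c11} genuinely transfer to the contracted instance: namely that each super-terminal $q_i$ is a single item in the coverage instance, that $T^*$ restricted to the uncontracted graph yields a valid feasible solution reaching $\geq \mu$ of these items (this is \Cref{claim:true-terms}), and that the partition-matroid bound $B^*$ on $X(a)$ is still the right quantity because each pair $(a, c)$ added for a distinct $c$ costs one unit of out-degree at $a$ in $T^*$. Once these are in place, \Cref{c5} gives the bound mechanically. So the bulk of the proof is simply citing \Cref{c5} after checking that we are in its setting; there are no new calculations.
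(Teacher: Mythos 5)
Your proposal is correct and follows essentially the same route as the paper: in the large-tree branch the shortest path $P$ adds degree at most $2$ to any vertex of $R$, and in the packing branch the bound follows directly from the \texttt{PMCover} analysis (Claim~\ref{c5}), with Claim~\ref{c11} / Claim~\ref{claim:true-terms} supplying feasibility for the super-terminal instance. The extra care you take in checking that the hypotheses of Claims~\ref{c5} and~\ref{c11} transfer to the contracted instance is a reasonable elaboration of what the paper leaves implicit, but it is not a different argument.
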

\begin{proof}
If a large tree $Q_v$ is found, a shortest 
path from $r$ to $Q_v$ is computed.
This increases the degree of any  vertex by at most $2$.
Otherwise, Procedure \texttt{PMCover} is applied. 
This increases the degree of vertices of $R$
by $O(\log k)\cdot B^*$. The claim follows.
\end{proof}

\begin{claim}
The returned tree contains $k$ terminals, 
has maximum degree 
$\tilde O(t^{1/3})\cdot B^*$ and height 
$O(D^*)$
\end{claim}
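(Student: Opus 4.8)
The plan is to verify the three claimed bounds—coverage of $k$ terminals, maximum degree $\tilde O(t^{1/3})\cdot B^*$, and diameter $O(D^*)$—by combining the per-iteration guarantees already established (Claims \ref{number}, \ref{before}, \ref{after}) with the termination condition of the while loop and the structure of the final contracted tree. For coverage, I would argue that the loop terminates only when $k \le 0$; each iteration either finds a large tree $Q_v$ and contracts it into $r$ (covering all $t^{2/3}$ of its terminals, which are then removed from $S'$ and deducted from $k$), or applies \texttt{PMCover} to cover at least $\mu$ super-terminals, i.e. at least $\mu \cdot t^{1/3} \ge k'$ original terminals, after which all terminals in $\bigcup_i T_i$ are removed and $k$ updated accordingly. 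Crucially, each discarded terminal is either genuinely covered by $R$ or lies in an \emph{unused} small tree $T_i$; in the latter case we must argue that discarding it is safe, namely that $T^*$ covers at most $k'$ of the terminals in $\bigcup_i T_i$ and \texttt{PMCover} already extracts $\mu \ge \lceil k'/t^{1/3}\rceil$ super-terminals, so any optimal solution restricted to the uncontracted part still has enough terminals available outside $\bigcup_i T_i$ to reach the (possibly reduced) target. Since $k$ decreases to $0$ and the tree induced by $R$ contains every terminal that was ever ``covered'' (the contractions only merge nodes, they do not drop edges), the returned tree contains at least the original $k$ terminals.

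For the degree bound, I would fix an arbitrary vertex $v$ and track how its degree accumulates across the $\le t^{1/3}$ iterations (Claim \ref{number}). By Claim \ref{before}, in any single iteration a vertex $v \notin R$ gains at most $2t^{1/3}+2$ degree, and moreover the moment its degree increases it is contracted into $r$; hence a vertex can suffer this ``$2t^{1/3}+2$'' hit at most once over the whole execution. After that point $v$ is part of $R$ (identified with $r$), and by Claim \ref{after} each subsequent iteration adds at most $O(\log k)\cdot B^*$ to the degree of any vertex in $R$. Summing over the at most $t^{1/3}$ iterations gives a total of $(2t^{1/3}+2) + t^{1/3}\cdot O(\log k)\cdot B^* = \tilde O(t^{1/3})\cdot B^*$, using that $B^* \ge 1$ to absorb the additive $2t^{1/3}+2$ term. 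Finally I would note that contracting $Q_v$ (or the \texttt{PMCover} subgraph $Q$) into $r$ does not increase any vertex's degree beyond what was counted, since the contracted object is internally a tree/low-degree subgraph whose edges were already accounted for, and the final ``tree induced by $R$'' is a shortest-path tree, which only removes edges.

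For the diameter bound, the key observation is that at the end of each iteration the newly covered material ($Q_v$ together with its connecting path $P$, or the \texttt{PMCover} output $Q$) is contracted into the single node $r$. Thus at the start of every iteration, $R$ is a single contracted vertex at ``distance $0$'', and anything attached in that iteration is within $O(D^*)$ of $r$: a large tree $Q_v$ has height $\le D^*$ over its root, its connecting path $P$ has length $\le D^*$, and its component small trees $T_i$ add another $\le D^*$; similarly the \texttt{PMCover} analysis (mirroring Claim \ref{claim:success}) gives depth $O(D^*)$ for $Q$. Un-contracting all the merged pieces, a vertex added in iteration $j$ sits at distance $O(D^*)$ below the contracted blob, which itself was built in iteration $j-1$ at distance $O(D^*)$ below its blob, and so on; a priori this telescopes to $O(D^*)\cdot t^{1/3}$, which is too weak. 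The main obstacle, therefore, is to argue that the diameter does \emph{not} grow with the number of iterations. I expect the resolution is that whenever \texttt{PMCover} or a large-tree step attaches a node, it attaches it within distance $O(D^*)$ \emph{in the original graph $G$}, not in the contracted graph—because \texttt{PMCover}'s coverage sets and the coverage trees $T(c)$ are defined by distances in $G[C]$ with $C \subseteq V$, and by Assumption \ref{ass:dist} every vertex of $G$ is within $D^*$ of the original root $r$. Hence every vertex ever added to $R$ is within $O(D^*)$ of $r$ in $G$, independent of the iteration count, and the final shortest-path tree rooted at $r$ has height $O(D^*)$. I would make this precise by showing inductively that $R$, viewed as a subgraph of $G$ (un-contracted), always has radius $O(D^*)$ from $r$: the base case is trivial, and each step adds only vertices reachable from the current $R$ via a path of length $O(D^*)$ that itself stays within the $D^*$-ball of $r$, so the radius stays $O(D^*)$ rather than doubling each time. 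Combining the three parts completes the proof of Theorem~\ref{t2}.
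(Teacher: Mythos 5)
Your coverage and degree arguments are correct and essentially the paper's own: the coverage invariant (in each iteration, among the discarded terminals the algorithm covers at least as many as $T^*$ does, namely $\mu\cdot t^{1/3}\ge k'$ in the \texttt{PMCover} case and everything in the large-tree case, so the residual target stays feasible) and the degree accounting (a one-time $O(t^{1/3})$ hit before a vertex joins $R$ by Claim~\ref{before}, then $O(\log k)\cdot B^*$ per iteration by Claim~\ref{after}, summed over the at most $t^{1/3}$ iterations of Claim~\ref{number}) match the paper's proof.

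The gap is in your diameter argument. You rightly flag the telescoping danger, but your proposed resolution does not close it. That every vertex of $G$ lies within distance $D^*$ of $r$ \emph{in $G$} (Assumption~\ref{ass:dist}) says nothing about its distance to $r$ inside the returned tree, which is induced by $R$ alone; the $G$-shortest paths witnessing Assumption~\ref{ass:dist} need not lie in $R$, and the algorithm does not return a shortest-path tree of $G$. Your inductive step is likewise a non sequitur: in the large-tree case $P$ is a shortest path from the \emph{contracted} root, so after un-contracting it may attach at an arbitrary vertex $u\in R$, and in the \texttt{PMCover} case the chosen pairs $(a,c)$ may have $a\in R$ arbitrary; ``staying within the $D^*$-ball of $r$ in $G$'' does not prevent the radius of $G[R]$ from growing by $O(D^*)$ per iteration through deep attachment points, which is exactly the telescoping you set out to rule out. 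The paper's (terse) proof instead measures each new piece against the current contracted root: every $Q_v\cup P$ and every $\mc{E}\cup T_C$ has diameter $O(D^*)$ and is attached within distance $D^*$ of the contracted $r$ (i.e., of the set $R$), which is what ``the distance of $r$ to any $Q$ or $Q_v$ is at most $D^*$'' means there. If you insist on bounding the height from the original root inside the output subgraph, the clean repair is the device already used in the directed \emph{Many Trees} step: for each of the at most $t^{1/3}$ iterations also include a shortest path in $G$ from the original $r$ to the newly attached piece (length at most $D^*$ by Assumption~\ref{ass:dist}); these at most $t^{1/3}$ extra paths add at most $t^{1/3}$ to any degree, absorbed by the $\tilde O(t^{1/3})\cdot B^*$ bound, after which a shortest-path tree of the resulting subgraph has height $O(D^*)$. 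As written, your proposal supplies no such mechanism, so the diameter bound is not established.
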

\begin{proof}
By Claim \ref{before}, 
an iteration of Procedure \texttt{Undirected} increases 
the degree of a vertex $v \not \in R$ by at most $O(t^{1/3})$, and any $v$ whose degree increases immediately joins $R$. 
Now we bound 
the degree added to a vertex in $R$.
By Claim 
\ref{after} at every iteration the degree 
of $v\in R$ can increase by 
$O(\log k)\cdot B^*$. 
By Claim \ref{number}, the number of iterations 
of Procedure \texttt{Undirected} is bounded by 
$t^{1/3}$. Therefore the total degree 
of a vertex is at most 
\[
    O(t^{1/3})+O(\log k)\cdot B^* \cdot t^{1/3}=O(\log k)\cdot t^{1/3} \cdot B^*.
\]
In addition, the diameter of every $Q_v$ or $Q$ found, is
$O(D^*)$.
The distance of $r$ to any $Q$ or $Q_v$ is at most 
$D^*$ as well by \Cref{ass:dist}. This ensures that the height of the tree is $O(D^*)$.

Finally, we argue that $k$ terminals are covered. Fix a particular iteration of the algorithm. If Procedure \texttt{Small} succeeds in finding a $t^{1/3}$-additive partition, then we immediately cover the remaining number of terminals necessary by applying the Procedure \texttt{Complete}. Otherwise, we argue that among the terminals discarded in this iteration, the algorithm covers at least as many as $T^*$ covers. Indeed, if $C$ is not a $t^{1/3}$-packing with respect to $\{q_i\}$, then all terminals discarded are covered. On the other hand, if $C$ is a $t^{1/3}$-packing, then by applying Procedure \texttt{PMCover}, \Cref{claim:true-terms} ensures that at least $\mu$ super-terminals are covered. Hence at least $\mu \cdot t^{1/3} \geq k'$ terminals are covered, which is precisely the number of terminals covered by $T^*$ among those discarded. 
\end{proof}

Using 
\Cref{lem:poise-to-mtm},
we get the following 
corollary that proves Theorem~\ref{t2}.
\begin{corollary}
The Minimum Time Telephone $k$-Multicast 
problem on undirected graphs 
admits a polynomial time,  $\tilde O(t^{1/3})$-approximation algorithm.
\end{corollary}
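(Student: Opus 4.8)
The plan is to assemble the corollary from two ingredients that are already in hand: the structural guarantee on Procedure \texttt{Undirected} proved in the preceding claim, and the poise‑to‑schedule reduction of \cite{g56}. First I would observe that the last claim of the previous section already shows that, on any undirected instance, Procedure \texttt{Undirected} runs in polynomial time and outputs a tree $T$ spanning $k$ terminals with maximum degree $\tilde O(t^{1/3})\cdot B^*$ and diameter $O(D^*)$, where $B^*$ and $D^*$ are the degree and height of an optimal minimum‑poise $k$-tree $T^*$. (Recall that $B^*,D^*$ may be assumed known by enumerating the polynomially many possibilities, and by \Cref{ass:dist} all vertices of distance more than $D^*$ from $r$ have been deleted.) Consequently $T$ has poise at most $\tilde O(t^{1/3})\cdot(B^*+D^*)=\tilde O(t^{1/3})\cdot p^*$; that is, $T$ is an $\tilde O(t^{1/3})$-approximate solution to the undirected Minimum Poise Steiner $k$-Tree problem, which is exactly Theorem~\ref{t2}.

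Next I would invoke the reduction of \cite{g56}, as quoted in the introduction: from a $\rho$-approximate minimum‑poise (Steiner $k$-)tree one can construct, in polynomial time, a telephone ($k$-)multicast schedule informing the spanned terminals in $O(\log t/\log\log t)\cdot\rho\cdot p^*$ rounds. Applying this with $\rho=\tilde O(t^{1/3})$ yields a schedule that informs $k$ terminals in $\tilde O(t^{1/3})\cdot p^*$ rounds. Since in each round every informed vertex transmits along at most one edge, we have $\OPT\geq B^*$ and $\OPT\geq D^*$, hence $\OPT\geq p^*/2$; therefore the schedule length is $\tilde O(t^{1/3})\cdot\OPT$, which is the claimed approximation ratio.

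The step to treat with a little care — and the only place where anything beyond bookkeeping is needed — is checking that the scheduling algorithm of \cite{g56}, stated for trees spanning the full terminal set, applies verbatim when the tree $T$ reaches only $k$ of the $t$ terminals. This is immediate: the schedule produced by that algorithm on $T$ never routes to a vertex outside $T$, so the same $O(\log t/\log\log t)$ overhead on the poise of $T$ carries over; this is precisely the sense in which \cite{g56} makes undirected $k$-MTM equivalent to undirected Minimum Poise Steiner $k$-Tree up to logarithmic factors. Finally, one notes that every logarithmic factor incurred along the way — from the enumeration of $(B^*,D^*)$, from the $O(\log k)$ inner iterations of \texttt{PMCover}, and from the scheduling overhead — is $\polylog(t)$ and thus absorbed into the $\tilde O(\cdot)$. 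Combining these observations yields the polynomial‑time $\tilde O(t^{1/3})$-approximation for undirected $k$-MTM asserted by the corollary.
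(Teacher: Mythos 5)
Your proposal is correct and follows essentially the same route as the paper: the corollary is obtained by combining the final claim of the undirected analysis (the tree returned by Procedure \texttt{Undirected} spans $k$ terminals with degree $\tilde O(t^{1/3})\cdot B^*$ and diameter $O(D^*)$, hence poise $\tilde O(t^{1/3})\cdot p^*$) with the poise-to-schedule reduction of \cite{g56}, using $\OPT \geq p^*/2$ and absorbing all logarithmic overheads into $\tilde O(\cdot)$. Your extra care in checking that the reduction applies to Steiner $k$-trees rather than full Steiner trees is exactly the equivalence the paper asserts when introducing the Minimum Poise Steiner $k$-Tree problem, so nothing is missing.
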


\section{Bounded Degree Directed Steiner Tree}

%\dnote{Check the following for submission-readiness}

In this section we address the related problem of \textbf{Bounded Degree Directed Steiner Tree}. We get as input a directed graph $G(V, E)$ with root $r$ and edge weights $\{w_e\}_{e \in E}$, a collection of terminals $S \subseteq V$, and an (out-)degree bound $d_v$ for every vertex $v \in V$. The goal is to find a minimum-weight directed Steiner tree, that is, an out-arborescence connecting the root to every terminal, that satisfies the (out-)degree bound for every vertex. 

We prove \Cref{thm:dbdst}, restated below, for graphs of bounded treewidth. This extends the analogous result of Dinitz, Kortsarz, and Li~\cite{DKL24} for the Group Steiner Tree problem with degree bounds on graphs of bounded treewidth. 

\dbdst*

Our methodology follows that of \cite{DKL24}: a reduction to the so-called Tree Label Problem.

\subsection{The Tree Label Problem Theorem}

We briefly describe the Tree Label Problem, introduced by Dinitz, Kortsarz, and Li~\cite{DKL24}. An instance is a full binary tree $\bb{T} = (\bb V, \bb E)$ rooted at $\bb r \in \bb V$, and for each $u \in \bb V$ there is a finite label set $L_u$ (all disjoint). Denote $L := \bigcup_{u \in \bb V} L_u$. The output is a labeling $\vec \ell = (\ell_u \in L_u)_{u \in \bb V}$ satisfying: 
\begin{itemize}
    \item \textit{Consistency constraints}: labels must come from some $\Gamma_u \subseteq L_u \times L_v \times L_{v'}$ for each parent and children pair $u$ and $v, v'$. 
    \item \textit{Covering constraints}: there are $k$ subsets $S_1, \dots, S_k$ of labels, and at least one label must be chosen from each subset. 
    \item \textit{Cost constraints}: there are $m \geq 0$ linear costs $c_\ell^i \in [0, 1]$. We must satisfy $\sum_{u \in \bb V} c^i_{\ell_u} \leq 1$ for every $i$. 
\end{itemize}

Let $n = \abs{\bb V}$, $D$ be the height of $\bb T$, and $\Delta = \max_{u \in \bb V} \abs{L_u}$. They give a randomized algorithm for this problem which we will use as a black box in the reduction. 

\begin{theorem}[\cite{DKL24}]\label{thm:tree-label}
Given a feasible (i.e. there exists a valid labeling) tree label instance $(\bb T, \bb r, (L_u)_{u}, (\Gamma_u)_u, (S_t)_{t}, \bb c \in [0, 1]^{m \times L})$, there is a randomized $\poly(n) \cdot \Delta^{O(D)}$-time algorithm that outputs a consistent labeling $\vec \ell$ such that 
\begin{enumerate}[(a)]
    \item For every $t \in [k]$, we have $\Pr[\vec \ell \text{ covers group } S_t] \geq \frac{1}{D}$. 
    \item For every $i \in [m]$, we have $\E[\exp(\ln(1 + \frac{1}{2D}) \cdot \cost^i(\vec \ell))] \leq 1 + \frac{1}{D}$.
\end{enumerate}
\end{theorem}

\subsection{Reduction to the Tree Label Problem}

Consider an instance of the bounded degree Directed Steiner Tree problem in which the given graph $G = (V,E)$ with root node $r$ has bounded treewidth $\tw$. That is, there is a tree decomposition $\bb T = (B, \bb E)$ with bags $b \in B$ of size $\tw + 1$. The vertex set of a bag is denoted $X_b$. Bodlaender showed~\cite{B88} that we may further assume that it has the following properties. 
\begin{itemize}
    \item $\bb T$ is a full binary tree rooted at $\bb r$, with depth $O(\log n)$ where $n = |V|$. 
    \item $\abs{X_b} \leq O(1) \cdot \tw$ and $r \in X_b$ for every $b$.
    \item For every edge $(u, v) \in E$, there is some $b$ with $u, v \in X_b$
    \item For every $v \in V$, the set of bags containing $v$ is connected in $\bb T$. 
\end{itemize}
For each edge $e$, say $b_e$ is the highest bag in $\bb T$ containing $e$, and for a bag $b$, $E_b$ is the set of edges for which $b_e = b$ (all the edges for which $b$ is ``responsible''). We also use $\Lambda(b)$ to denote the set of descendants (including $b$) of $b$ in $\bb T$.

Given such an instance and the associated tree decomposition, we now construct a tree label instance. The idea is that our label sets will describe the connectivity structure that a subgraph (i.e. the output directed Steiner tree) induces on the vertices in each bag, while the consistency constraints ensure that a single subgraph can be recovered from a labeling. The covering and cost constraints are then used to ensure feasibility of this subgraph (each terminal is reachable from the root) and our approximation guarantees, respectively. 

Given a subgraph $H$, we define the ``state information'' of $H$ for each bag: 
\begin{itemize}
    \item $F_b(H)$ is the set of edges $H$ contains from $E_b$.
    \item $\Pi^\downarrow_b(H)$ is a transitive relation on $X_b$ that tells for each $u, v \in X_b$ whether $v$ is reachable from $u$ in the subgraph induced on the descendant bags $(V_H, E_H \cap \bigcup_{b' \in \Lambda(b)} E_{b'})$. 

    \item $\Pi^\uparrow_b(H)$ is a transitive relation on $X_b$ that tells for each $u, v \in X_b$ whether $v$ is reachable from $u$ in the subgraph induced on the non-descendant bags. That is, the subgraph $(V_H, E_H \cap \bigcup_{b' \in (B \setminus \Lambda(b)) \cup \{b\}} E_{b'})$.
\end{itemize}
Now, we define the label set $L_b$ for each $b \in B$ to be all tuples $(F_b, \Pi^\downarrow_b, \Pi^\uparrow_b)$ that correspond to some valid $H$. Since we think of $H$ as a valid output to the DST problem, we can assume that $F_b$ is a branching (a forest of arborescences) for each $b$. Furthermore, if $b = \bb r$ the root of $\bb T$, we must have $\Pi^\uparrow_b = \cc(F_b)$, while if $b$ is a leaf of $\bb T$, then $\Pi^\downarrow_b = \cc(F_b)$. Here, $\cc(F_b)$ is the relation specifying the reachability of pairs using only those arcs in $F_b$. 

The consistency constraints are then the natural constraints that must be satisfied by any possible output $H$. We use $\lor$ to denote the transitive closure of the union of two relations. Then for parent $b$ and its two children $b', b''$, the triple $(F_b, \Pi^\downarrow_b, \Pi^\uparrow_b)$, $(F_{b'}, \Pi^\downarrow_{b'}, \Pi^\uparrow_{b'})$, $(F_{b''}, \Pi^\downarrow_{b''}, \Pi^\uparrow_{b''})$ is consistent if 
\begin{itemize}
    \item $\Pi^\downarrow_b = (\Pi^\downarrow_{b'} \lor \Pi^\downarrow_{b''} \lor \cc(F_b))[X_b]$,
    \item $\Pi^\uparrow_{b'} = (\Pi^\uparrow_b \lor \Pi^\downarrow_{b''} \lor \cc(F_{b'}))[X_{b'}]$, and 
    \item $\Pi^\uparrow_{b''} = (\Pi^\uparrow_b \lor \Pi^\downarrow_{b'} \lor \cc(F_{b''}))[X_{b''}]$.
\end{itemize}

The following claim, analogous to Claim 4.1 from \cite{DKL24}, says that these consistency constraints are enough to recover a subgraph from a set of consistent labels. 
\begin{claim}\label{claim:pullback-labeling}
If $\{(F_b, \Pi^\downarrow_b, \Pi^\uparrow_b)\}_{b \in B}$ is a consistent labeling of $\bb T$, let $H = (V, \bigcup_{b \in B} F_b)$. Then $\Pi^\downarrow_b(H) = \Pi^\downarrow_b$ and $\Pi^\uparrow_b(H) = \Pi^\uparrow_b$ for every $b \in B$.
\end{claim}

\begin{proof}
The proof is by induction. First we show $\Pi^\downarrow_b(H) = \Pi^\downarrow_b$. For each leaf bag $b$, this clearly holds since $\Pi^\downarrow_b$ is just $\cc(F_b)$. 

Then, for any non-leaf bag $b$, we have by induction that its children $b'$ and $b''$ satisfy the hypothesis. Hence, by consistency, we must have 
\begin{align}
    \Pi^\downarrow_b &= (\Pi^\downarrow_{b'} \lor \Pi^\downarrow_{b''} \lor \cc(F_b))[X_b] \nonumber \\
    &= (\Pi^\downarrow_{b'}(H) \lor \Pi^\downarrow_{b''}(H) \lor \cc(E_H \cap E_b))[X_b] \label{eq:unions}
\end{align}
Observe that $\Pi^\downarrow_b(H)$ is defined as the reachability relation within the edge set $E_H \cap \bigcup_{c \in \Lambda(b)} E_{c}$. But we can rewrite 
\[
    E_H \cap \bigcup_{c \in \Lambda(b)} E_{c} = \left(E_H \cap \bigcup_{c \in \Lambda(b')} E_{c}\right) \cup \left(E_H \cap \bigcup_{c \in \Lambda(b'')} E_c \right) \cup \left(E_H \cap E_b\right).
\]
Therefore, $\Pi^\downarrow_b(H)$ is precisely the transitive closure of the union of the reachability relations for each of the three sets in the union of the right hand side. These are $\Pi^\downarrow_{b'}(H)$, $\Pi^\downarrow_{b''}(H)$, and $\cc(E_H \cap E_b)$, appearing in \eqref{eq:unions}. So, by induction we have shown that $\Pi^\downarrow_b = \Pi^\downarrow_b(H)$ for every $b$, as desired. 

Next, we use a second induction to show that $\Pi^\uparrow_b = \Pi^\uparrow_b(H)$. This time, the base case is the root bag $b = \bb r$. Here $\Pi^\uparrow_b = \cc(F_b) = \Pi^\uparrow_b(H)$. 
The proof of the induction step is identical to before, except that we now observe that the induction hypothesis \emph{and} the above proof that $\Pi^\downarrow_b = \Pi^\downarrow_b(H)$ give, for some child $b'$ of $b$ with sibling $b''$
\begin{align*}
    \Pi^\uparrow_{b'} &= (\Pi^\uparrow_{b} \lor \Pi^\downarrow_{b''} \lor \cc(F_{b'}))[X_{b'}]\\
    &= (\Pi^\uparrow_{b}(H) \lor \Pi^\downarrow_{b''}(H) \lor \cc(E_H \cap E_{b'}))[X_{b'}]
\end{align*}
Now proceed as before. 
\end{proof}

Finally, we prove \Cref{thm:dbdst} by describing how to reduce Bounded Degree Directed Steiner Tree on bounded treewidth graphs to the Tree Label Problem. 

\begin{proof}[Proof of \Cref{thm:dbdst}]
Given an instance of Bounded Degree Directed Steiner Tree, we create a Tree Label instance on the tree decomposition $\bb T$ of $G$ (assuming the properties on $\bb T$ described above). The labels of a bag $b \in B$ are $L_b$, with consistency constraints as described.

Now we may encode the requirement that all terminals are reached from the root using the covering constraints: for each terminal $t \in S$ and each bag $b \in B$ containing $t$, the set $S_t$ contains every label $(F_b, \Pi^\downarrow_b, \Pi^\uparrow_b)$ for which $t$ is reachable from $r$ in $\Pi^\downarrow_b \lor \Pi^\uparrow_b$ (recall that $r \in X_b$ for every bag $b$).  

The degree bounds can be encoded with cost constraints: for each $v \in V$, there will be a cost $c^v$ on each label. We consider those labels $\ell = (F_b, \Pi^\downarrow_b, \Pi^\uparrow_b)$ for which $v \in X_b$ and give them a cost of $c^v_\ell = \abs{\delta_{F_b}^+(v)}$. Disallow all labels for which $c^v_\ell > d_v$, then scale all costs down by a $d_v$ factor so they lie in $[0, 1]$. Now, the $c^v$ cost constraint in the Tree Label instance corresponds exactly to the desired degree constraint. 

Next, we add one more cost constraint corresponding to the weight $W^*$ of the optimal solution (we can assume we know the weight of $\OPT$, using binary search to guess it). For a label $\ell = (F_b, \Pi^\downarrow_b, \Pi^\uparrow_b)$, we give it a cost of $c_\ell = \sum_{e \in F_b} w_e$. Disallow all labels with $c_\ell > W^*$, and scale all costs by $W^*$ so that they lie in $[0, 1]$. Thus, again, the cost constraint corresponds exactly with the weight of the output graph being at most $W^*$. 

Finally, we note that the depth of the tree is $D = O(\log n)$. And since the maximum bag size is $O(1) \cdot \tw$, then we have for any bag $b \in B$ that there are at most $\tw^{O(\tw)}$ possible forests $F_b$, and at most $2^{O(\tw^2)}$ possible choices for each of $\Pi^\downarrow_b$ and $\Pi^\uparrow_b$. Hence, in total, since treewidth is bounded, we have $\Delta = O(1)$, implying that the Tree Label algorithm from \Cref{thm:tree-label} runs in polynomial time on this instance. 

If we run the Tree Label algorithm from \Cref{thm:tree-label}, we get an output labeling of $(F_b, \Pi^\downarrow_b, \Pi^\uparrow_b)$ for each bag $b$. Let $H = (V, \bigcup_{b \in B} F_b)$. Then we know from \Cref{claim:pullback-labeling} that $\Pi^\downarrow_b$ and $\Pi^\uparrow_b$ represent the connectivity of $H$. Therefore, a covering constraint $S_t$ being satisfied by the labeling is equivalent to $H$ connecting the root $r$ to terminal $t$. So the guarantees of \Cref{thm:tree-label} say that 
\begin{itemize}
    \item $H$ connects $r$ to $t$ with probability at least $\frac{1}{D}$, for each terminal $t$.  
    \item $\E\left[\exp(\ln(1 + \frac{1}{2D}) \cdot \frac{w(H)}{W^*})\right] \leq 1 + \frac{1}{D}$. 
    \item $\E\left[\exp(\ln(1 + \frac{1}{2D}) \cdot \frac{d_H(v)}{d_v})\right] \leq 1 + \frac{1}{D}$ for every $v$. 
\end{itemize}

We may run the Tree Label algorithm from \Cref{thm:tree-label} $M = \Theta(D \log n) = \Theta(\log^2 n)$ times, and take the union $H$ of all the returned subgraphs. With high probability, $H$ connects the root to all terminals. Moreover, for each $v$, since each run of the algorithm is independent, we have $\E\left[\exp(\ln(1 + \frac{1}{2D}) \cdot \frac{d_H(v)}{d_v})\right] \leq (1 + \frac{1}{D})^M = n^{O(1)}$. So we may apply Markov's inequality and a union bound to get $\exp(\ln(1 + \frac{1}{2D}) \cdot \frac{d_H(v)}{d_v}) \leq n^{O(1)}$ for every $v$, with high probability. In particular, $d_H(v) \leq O(d_v \cdot D \ln n) = O(\log^2 n) \cdot d_v$, with high probability. Identically, we get $w(H) \leq O(\log^2n) \cdot W^*$ with high probability, completing the proof. 
\end{proof}

\section*{Acknowledgements}

D.\ Hathcock was supported by the NSF Graduate Research Fellowship grant DGE-2140739. This material is based upon work supported in part by the Air Force Office of Scientific Research under award number FA9550-23-1-0031 to R.\ Ravi. 

\bibliographystyle{alpha}
\bibliography{references2}

\end{document}